\algrenewcommand\ALG@beginalgorithmic{\ttfamily}
\newtheorem{theorem}{Theorem}
\newtheorem{corollary}{Corollary}
\newtheorem{lemma}{Lemma}
\newtheorem{definition}{Definition}
\newtheorem{proposition}{Proposition}
\renewcommand{\epsilon}{\varepsilon}
\newcommand{\N}{\mathbb{N}}
\newcommand{\Q}{\mathbb{Q}}
\newcommand{\Z}{\mathbb{Z}}
\newcommand{\R}{\mathbb{R}}
\newcommand{\missingspinsymbol}{\blacktriangle}
\newcommand{\addressdelimiter}{\text{\textvisiblespace}}
\newcommand{\lemo}{\lhd}
\newcommand{\rimo}{\rhd}
\renewcommand{\diamond}{\lozenge}
\newcommand{\sun}{\circ}
\newcommand{\newmoon}{\bullet}
\definecolor{darkblue}{rgb}{0.0, 0.0, 0.55}
\begin{document}

\title{Classical spin Hamiltonians are context-sensitive languages}

\author{\normalsize{Sebastian Stengele$^{1}$, David Drexel$^{1}$ and Gemma De las Cuevas$^{1\ast}$}\\
\small{$^1$Institute for Theoretical Physics, University of Innsbruck,}\\ \small{Technikerstr.\ 21a,  A-6020 Innsbruck, Austria} \\
\small{$^\ast$ {gemma.delascuevas@uibk.ac.at}}}

\date{{\normalsize \today}}

\maketitle

\begin{abstract}
Classical spin Hamiltonians are a powerful tool to model complex systems, characterised by a local structure given by the local Hamiltonians. One of the best understood local structures is the grammar of formal languages, which are central in computer science and linguistics, and have a natural complexity measure given by the Chomsky hierarchy. If we see classical spin Hamiltonians as languages, what grammar do the local Hamiltonians correspond to? Here we cast classical spin Hamiltonians as formal languages, and classify them in the Chomsky hierarchy. We prove that the language of (effectively) zero-dimensional spin Hamiltonians is regular, one-dimensional spin Hamiltonians is deterministic context-free, and higher-dimensional and all-to-all spin Hamiltonians is context-sensitive. This provides a new complexity measure for classical spin Hamiltonians, which captures the hardness of recognising spin configurations and their energies. We compare it to the computational complexity of the ground state energy problem, and find a different easy-to-hard threshold for the Ising model. We also investigate the dependence on the language of the spin Hamiltonian. Finally, we define the language of the time evolution of a spin Hamiltonian and classify it in the Chomsky hierarchy. Our work suggests that universal spin models are weaker than universal Turing machines.  
\end{abstract}

\clearpage

\section{Introduction}

One of the most fundamental lessons of physics is that interactions are local. 
This renders certain aspects of physical reality describable, 
as a potentially unbounded set of observations can be captured with a finite description, 
which is given by this local structure.  
In formal and natural languages, this finite description is given by a \emph{grammar}. 
Grammars are the backbone of computer science and linguistics (see, e.g.\ \cite{Ko97,Fr11b,Hi13b}), 
and the celebrated classification by Chomsky \cite{Ch65} provides a measure of their complexity.
Local interactions can thus be seen as the grammar of physical systems. 
For example, Feynman diagrams can be seen as the grammar of scattering amplitudes \cite{Ma15d},
and tensors and their `gluing' rules can be seen as the grammar of tensor networks, and by extension, of quantum many-body systems \cite{Ci20,Or18}. 

As a matter of fact, the only infinite families of systems we can describe are those that can be generated by a finite repertoire of procedures, i.e.\ those that have a grammar. 
Otherwise, the only `description' consists of listing all elements in the family---which can hardly be called a description at all.  
This does not only apply to families of physical systems, but also to families of complex systems \cite{Ho14c,Mi09,So00,Th18}.  
A simple counting argument shows that the vast majority of such families do not have a grammar \footnote{There are only countably many finite descriptions, and uncountably many infinite families. This follows e.g.\ by Cantor's diagonal argument. See, e.g., \cite{Ra19}.}.     
Yet, \emph{interesting} families are generally very special (see, e.g., \cite{De11e}) and tend to have a grammar. 
In fact, a finite description substantially contributes to making a family interesting. 

A powerful tool to model complex systems are classical spin Hamiltonians, 
where the `spin'  stands for any classical and discrete variable,  
and the `Hamiltonian' stands for any family of cost functions.  
The  paradigmatic spin Hamiltonian, the Ising model \cite{Is25}, 
has been applied far beyond its original purpose of modelling magnetism, for it has been used 
as a toy model of matter \cite{Am09a}, 
to model a gas \cite{Ch87}, 
in knot theory \cite{Ka01}, 
for artificial neural networks \cite{Ho82}, 
to model the size of canopy trees \cite{So00}, 
flocks of birds \cite{Bi12}, 
viruses as quasi-species \cite{An83,Ta92}, 
for protein folding \cite{Ba03} (together with its generalisation, the Potts model \cite{Ri20}), 
for economic opinions, urban segregation and language change   \cite{St08}, 
for random language models \cite{De19f}, 
social dynamics \cite{Ca09}  or
earthquakes \cite{Ji07}, 
and the US Supreme Court \cite{Le15c}.  
This wide applicability may be due to the recently discovered universality of the Ising model \cite{De16b}, 
meaning that it can simulate all other spin models \footnote{This notion of universality should not to be confused with the universality classes of spin models \cite{Wi75}. Universal spin models can simulate spin models in all universality classes, but beyond this observation it is unclear how the two notions of universality are related. We are currently developing a framework for universality to address this question \cite{St22a,St22b}; see also the outlook.}.

\begin{figure*}[th]\centering
\includegraphics[width=1\textwidth]{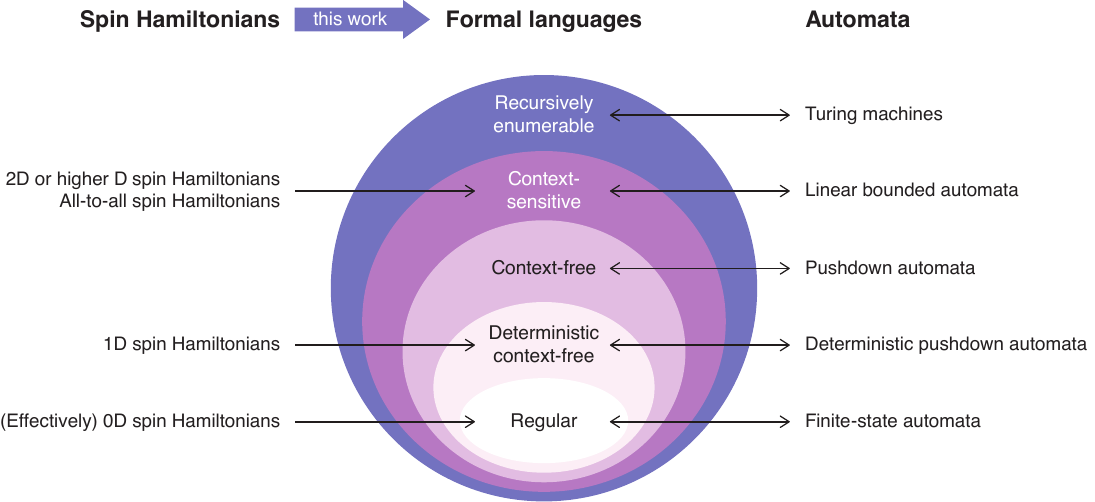}	
	\caption{{\textbf{Classifying the language of classical spin Hamiltonians in the Chomsky hierarchy.}} 
	(Left column) We cast classical spin Hamiltonians as formal languages, and classify them in the Chomsky hierarchy. 
	We show that the language of 
	0D spin Hamiltonians and effectively 0D spin Hamiltonians is regular, 
	the language of 1D spin Hamiltonians is deterministic context-free,
	and the language of $d$D spin Hamiltonians, with $d\geq2$ and all-to-all spin Hamiltonians is context-sensitive. 
The classification holds for any notion of locality with a bounded range, 
any (non-rectangular) lattice, 
any boundary conditions, 
and any symmetries.  
(Middle column)  The Chomsky hierarchy of formal languages, with the addition of deterministic context-free languages.   
	 (Right column)  Each level of the Chomsky hierarchy is associated to the class of automata recognising the corresponding languages. 
	}\label{fig:chomsky_hierarchy}
\end{figure*}

Intuitively, classical spin Hamiltonians are characterised by a local structure, given by the local Hamiltonians, which are the building blocks of the model. These building blocks play an analogous role to the grammar of a formal language. 
Can we describe the local interactions of a classical spin Hamiltonian as the grammar of a formal language?
If so, can we use the well-understood theory of formal languages to shed light on the complexity of the local structure of a spin Hamiltonian?
In other words, we are asking: 
\begin{quote}
\emph{What type of grammar does a spin Hamiltonian have?}
\end{quote}
In this work, we address this question by:
\begin{enumerate}[label=(\roman*)]
\item  establishing a new link between spin physics and formal languages, where we cast spin Hamiltonians as languages; and
\item  classifying the language of a spin Hamiltonian in the Chomsky hierarchy.  
\end{enumerate}
The latter renders a new complexity measure for spin Hamiltonians, which captures the hardness of the local structure.

More specifically, we consider a family of complex systems described by a classical spin Hamiltonian $H$, 
and cast $H$ as a formal language $L_H$. 
This language contains the set of all input--output pairs of $H$, 
that is, all pairs 
\begin{equation}
(x,H(x)) \quad \textrm{where $x$ is in the domain of $H$}. 
\end{equation}
We then classify $L_H$ in the Chomsky hierarchy of formal languages (\cref{fig:chomsky_hierarchy}) and prove that:  
\begin{enumerate}[label=(\roman*)]
\item 
The language of a 0D or effectively 0D spin Hamiltonian is regular;
\item  \label{ii} 
The language of a 1D spin Hamiltonian is deterministic context-free and not regular; and
\item \label{iii} 
The language of a $d$D for $d\geq2$ or all-to-all spin Hamiltonian is context-sensitive and not context-free. 
\end{enumerate}
We thus find that  essentially only the dimensionality of the lattice matters; 
the number of internal degrees of freedom,  
the notion of locality, 
the symmetries, 
or the boundary conditions are irrelevant for the classification, 
with the exception of effectively 0D spin Hamiltonians.  
 
This classification results in a new complexity measure of classical spin Hamiltonians that captures the hardness of the local structure of $H$, which is identified with the grammar of $L_H$.  
The usual complexity measure for spin Hamiltonians is given by the computational complexity of its ground state energy problem (GSE). 
We compare our new measure with the usual one, and find that they classify models differently. 
Specifically, for the Ising model, 
the new complexity  measure shows a different easy-to-hard threshold:  
for the GSE, 1D and 2D are  easy  and 3D is hard  
(in \textsf{P} and \textsf{NP}-complete \footnote{\textsf{P} and \textsf{NP} are the class of decision problems that can be recognised in polynomial time by a deterministic and non-deterministic Turing machine, respectively. A problem is \textsf{NP}-complete if it is in \textsf{NP}  and there is a polynomial-time reduction from any problem in \textsf{NP} to it. That the GSE of the 2D Ising without fields is in \textsf{P} follows from Onsager's solution, and the \textsf{NP}-completeness of the 3D Ising model was proven by Barahona \cite{Bar82}.}, respectively), 
whereas in our measure, 
1D is easy  and 2D and 3D are hard 
(deterministic context-free and context-sensitive, respectively). 

\begin{figure*}[th]\centering
\includegraphics[width=1\textwidth]{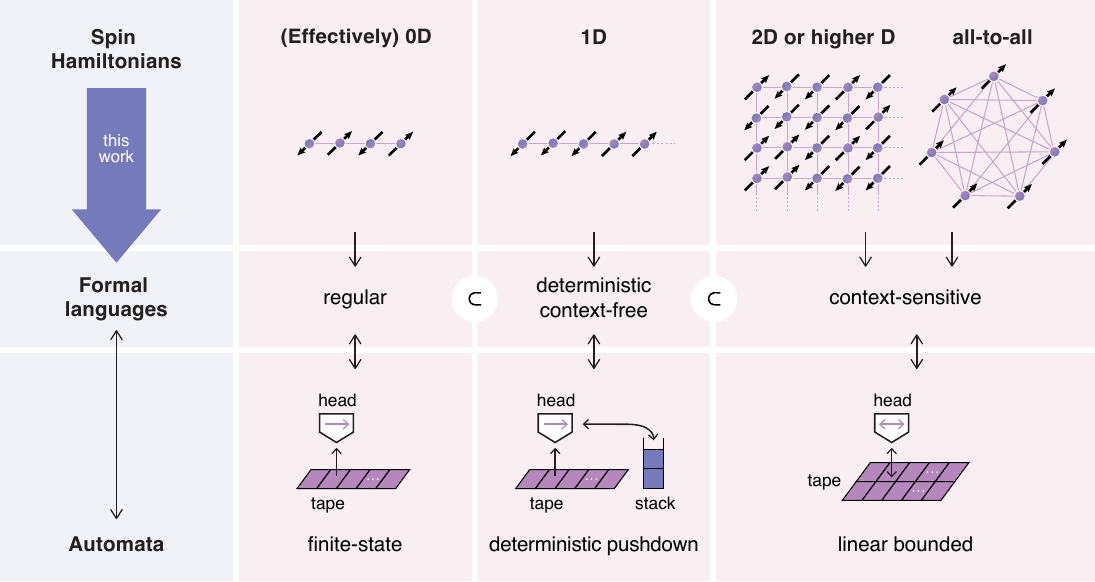}	
\caption{\textbf{Classical spin Hamiltonians, their languages, and the corresponding automata.} 
0D spin Hamiltonians are defined on a finite number of spins, 
and effectively 0D spin Hamiltonians are 1D spin Hamiltonians whose energy only depends on a finite number of spins. 
In both cases their language is regular, because it can be recognised by a finite-state automaton. 
The language of 1D spin Hamiltonians is deterministic context-free because it can be recognised by a deterministic pushdown automaton. 
The language of $d$D spin Hamiltonians for $d\geq2$ is context-sensitive, because it can be recognised by a linear bounded automaton. 
All-to-all spin Hamiltonians contain all possible $k$-body interactions, without any notion of lattice, and their language is also context-sensitive. 
}
\label{fig:main_result} 
\end{figure*}

In our complexity measure, 
once $L_H$ is fixed, its classification in the Chomsky hierarchy is unique, but $L_H$ depends on the casting of $H$ as a language. 
We partially characterise this freedom  by encoding the energy in binary, instead of unary, and   find that our classification (\cref{fig:chomsky_hierarchy}) only changes in that the language of 1D spin Hamiltonians becomes context-sensitive. 
In other words, a binary encoding can increase  the complexity, showcasing that the unary encoding is preferable. 

Finally, given a time evolution $U$, 
we define the language of the time evolution of a spin Hamiltonian, $L_U$,  as the set of all possible transitions, 
i.e.\ the set of all pairs $(x,U(x,t))$ such that $x$ is in the domain of $H$, for all times $t$. 
We classify $L_U$ in the Chomsky hierarchy and prove that $L_U$ is context-sensitive for all classes of spin Hamiltonians, 
except if $H$ is 0D, in which case it is regular, 
or if $H$ is effectively 0D or 1D, in either case with one spin value and fixed interactions, 
in which case it is deterministic context-free. 

This paper is structured as follows. 
We first define spin Hamiltonians and cast them as languages (\cref{sec:definitions}). 
Then we classify the languages of spin Hamiltonians in the Chomsky hierarchy (\cref{sec:mainresult}). 
We compare this complexity measure with the computational complexity of the ground state energy problem (\cref{sec:comparisonGSE}),  
investigate the dependence of our complexity measure on the encoding (\cref{sec:encoding}), and  
 the role of time (\cref{sec:time}). 
 Finally we conclude and present an outlook (\cref{sec:conclusion}). 
In \cref{ssec:eff0D} we characterise effectively 0D spin Hamiltonians, 
and in \cref{ssec:proofofmain,ssec:proofofcomputcompl,ssec:proofofbinary,ssec:proofoflangtime} we present the proofs of the results.

\section{Spin Hamiltonians and their language}\label{sec:definitions}

Here we define spin Hamiltonians (\cref{ssec:def}) and their associated language (\cref{ssec:language}). 

This section can be summarised as follows. 
We define a spin Hamiltonian $H$ as a map from spin configurations, intertwined with local interactions, to energies, for all system sizes. 
Every spin can take $q$ values (for fixed but arbitrary $q$),
and the local interaction specifies which local Hamiltonian is acting on which spins, 
where there are $p$ possible local Hamiltonians (fixed but arbitrary). 
Each local Hamiltonian acts on $k$ spins (fixed but arbitrary), and 
on which spins it acts will determine whether $H$ is a 1D, 2D or $d$D spin Hamiltonian, for any notion of a lattice, which need not be rectangular.    
For example, in 1D spin Hamiltonians the local Hamiltonians act on $k$ neighboring spins. 
We find a subclass of 1D spin Hamiltonians, called  effectively 0D spin Hamiltonians, 
whose energy is bounded (despite the fact that the number of spins is not), 
and depends only on a finite number of spins at each end of the chain---they are `holographic' 1D spin Hamiltonians. 
On the other hand, 0D spin Hamiltonians are defined as acting on a finite number of system sizes. 
Finally, we define all-to-all spin Hamiltonians as containing all possible $k$-body interactions, without any notion of lattice (\cref{fig:main_result}). 
The language of the spin Hamiltonian $L_H$ then contains all strings $e(x) u(H(x))$ for $x$ in the domain of $H$, where $e$ encodes the elements of the domain into strings, and $u$ is a unary encoding.

\subsection{Spin Hamiltonians\label{ssec:def}}

What is a spin Hamiltonian $H$? $H$ is a map from spin configurations to energies \emph{for all system sizes}. 
If $H$ were defined for a fixed system size, the corresponding language $L_H$ would be finite, and hence regular for trivial reasons. Ultimately, this is due to the fundamental distinction between finite and infinite made by computer science---any finite language is trivial, but infinite languages can be non-trivial in different ways.\footnote{Echoing Tolstoy's famous opening sentence, \emph{Happy families are all alike; every unhappy family is unhappy in its own way.}}

To be more precise, in this work a spin $s_i$ stands for a discrete classical spin, i.e.\ a variable that can take a finite number $q$ of values, which define the alphabet $\Sigma_q$.   
Now, $H$  maps  spin configurations  $s_1,\ldots, s_n$ to energies, for any system size $n$. 
Clearly, the maps for different sizes $n$ must be related---we formalise this relation 
with the notions of local Hamiltonian, local interaction and interaction structure (\cref{def:localinteraction}), 
which lead to a general definition of a spin Hamiltonian $H$ (\cref{def:spinhamiltonian}).  
We  later specialise to 0D, 1D and $d$D spin Hamiltonians, as well as all-to-all spin Hamiltonians, which we  then classify in the Chomsky hierarchy (\cref{thm:main}).

Let us now introduce the main actresses of \cref{def:localinteraction}. 
First, the \emph{local Hamiltonian} $h$ is a map from a finite number $k$ of spins to energies. 
This allows us to construct $H$ roughly as a sum of $n$ local Hamiltonians, for any $n$. 
There may be several local Hamiltonians (we call this finite number $p$)---this is for example the case in spin glasses, where the coupling strengths (and hence the local Hamiltonian) are drawn from a probability distribution. 
We include the information of which local Hamiltonian is acting as part of the input to $H$, that is, 
 $H$ is a map from $s_1,\alpha_1,\ldots, s_n,\alpha_n$ to energies, where $\alpha_j=1,\ldots, p=:[p]$ specifies the local Hamiltonian associated to spin $j$, namely $h_{\alpha_j}$. 
 
 Moreover, we take the energy to be integer-valued. 
 The reason is that computer science relies on entities with a finite description, so the rationals $\Q$ are a natural choice instead of the reals $\R$ (see \cref{sec:conclusion} for a discussion of other choices). 
Since there are finitely many local Hamiltonians, we can rescale all energies to the integers $\Z$.
 
Now, on which spins does the local Hamiltonian $h_{\alpha_j}$ act? 
This is described with a list of addresses $A_j$---each address is an integer number specifying the position of the corresponding spin with respect to $j$. For example, $A_j=(0,-1,1)$ refers to the spins $s_j, s_{j-1}, s_{j+1}$. 
The local Hamiltonian acts as $h_{\alpha_j} ((s_{j+l})_{l\in A_j})$, where 
 $(s_{j+l})_{l\in A_j}$ denotes the  tuple whose first element is $s_{j+l}$ where  $l$ is the first element in $A_j$, and so on. 
We call the pair $(\alpha_j,A_j)$ the \emph{local interaction} and denote it $I_j$.  

Not all possible interactions might be allowed in the input, 
and we formalise this idea with the \emph{interaction structure} $\mathcal{I}$:   
for a system size $n$ and  position $j$, the interaction structure at $n$ and $j$ (denoted $\mathcal{I}_{n,j}$) is an  element of the power set $\wp(\textrm{LocInt})$ of possible local interactions $\textrm{LocInt}$. 
Additionally, the interaction structure can be undefined for some system sizes; in this case, $H$ will be undefined for the corresponding system size. For example, for   0D spin Hamiltonians it will only be defined for a finite amount of system sizes, and for 2D spin Hamiltonians it will only be defined for multiples of a certain rectangular base lattice.

We also want to capture non-rectangular lattices, which can be seen as rectangular lattices with a non-trivial unit cell. 
That is, a unit cell with $t$ $r$-level spins can be described with a single $q$-level spin, where $q=r^t$, 
and other finite quantities are similarly enlarged. 
We will not resolve what happens inside the unit cell because it is irrelevant for our study---what matters is the scaling of $H$. 
So we will henceforth consider rectangular lattices, keeping in mind that each position can stand for a non-trivial unit cell.

The final element is a `missing spin symbol' $\missingspinsymbol$,  which can stand either for a spin value or for an address yielding that spin value. We use this symbol to describe  different boundary conditions in a unified way. 
Let us define 
\begin{equation}
\Z_\missingspinsymbol \coloneqq \Z \cup \{\missingspinsymbol\}, \quad 
\Sigma_{q\missingspinsymbol} \coloneqq \Sigma_q \cup \{\missingspinsymbol\}, 
\end{equation}
and its   $k$-fold Cartesian product as  $\Z_\missingspinsymbol^k$ and $\Sigma_{q\missingspinsymbol}^k$, respectively.  For any finite set $\Sigma$, we denote its Kleene star by $\Sigma^* = \cup_{k\geq 0} \Sigma^k$. 

We can now formally define the local Hamiltonian, a local interaction and the interaction structure. 

\begin{definition}
\label{def:localinteraction}
Let $q,k,p $ be positive integers. 
For every $\alpha \in [p]$, let the \emph{local Hamiltonian} be a function 
\begin{equation}
h_{\alpha}: \Sigma_{q \missingspinsymbol}^k \to \Z 
\end{equation}
A \emph{local interaction} $I$ is a pair
\begin{equation}
I =(\alpha,A) \in \mathrm{LocInt}: = [p]\times \Z_{\missingspinsymbol}^k   
\end{equation}
and the \emph{interaction structure} is a partial function 
		\begin{equation}
		\begin{split}
			\mathcal{I}: 
			\{(n,1), (n,2)\ldots (n,n) \mid  n \in \N\}
			& \rightharpoonup \wp \left( \mathrm{LocInt}
			 \right) \setminus \{\emptyset\} 
		\end{split}
	\end{equation}
	where for all $n$  we have that 
	$ \mathcal{I}_{n, j}$  is either defined for all $j \in [n]$ or for none.
\end{definition}

 We are now ready to define a spin Hamiltonian $H$.

\begin{definition}[Spin Hamiltonian]
\label{def:spinhamiltonian}
Let $q,k,p $ be positive integers.  
For all $\alpha\in [p]$ let $h_\alpha $ be a local Hamiltonian, and let $\mathcal{I}$ be an interaction structure (\cref{def:localinteraction}).
Define the domain $\mathcal{D}$ as
\begin{equation} \label{eq:domain}
\mathcal{D}=  
\bigcup_{n \geq 1,  \ \mathcal{I}_{n} \  \mathrm{defined}}
\left\{
	\left(
		s_1, I_1, \ldots , s_n, I_n
	\right) 
	  \mid s_j\in \Sigma_q ,I_j \in \mathcal{I}_{n,j} 
\right\}  
\end{equation}
A \emph{spin Hamiltonian} $H$ is given by a domain $\mathcal{D}$ and the map 
\begin{align} \label{eq:H}
\begin{split} 
H :\mathcal{D} &\to\Z \\
 (s_1,I_1,  \ldots , s_n, I_n)& \mapsto  
 \sum_{j=1}^{n}  h_{\alpha_j}( (s_{j+l})_{l \in A_j })
\end{split}
\end{align}
where $I_j = (\alpha_j, A_j)$, and where the symbol $\missingspinsymbol$ is used instead of $s_{j+l}$  if $l=\missingspinsymbol$ or if $j+l \notin [n]$. 
\end{definition}

\begin{figure*}[t!]\centering
	\includegraphics[width=.8\textwidth]{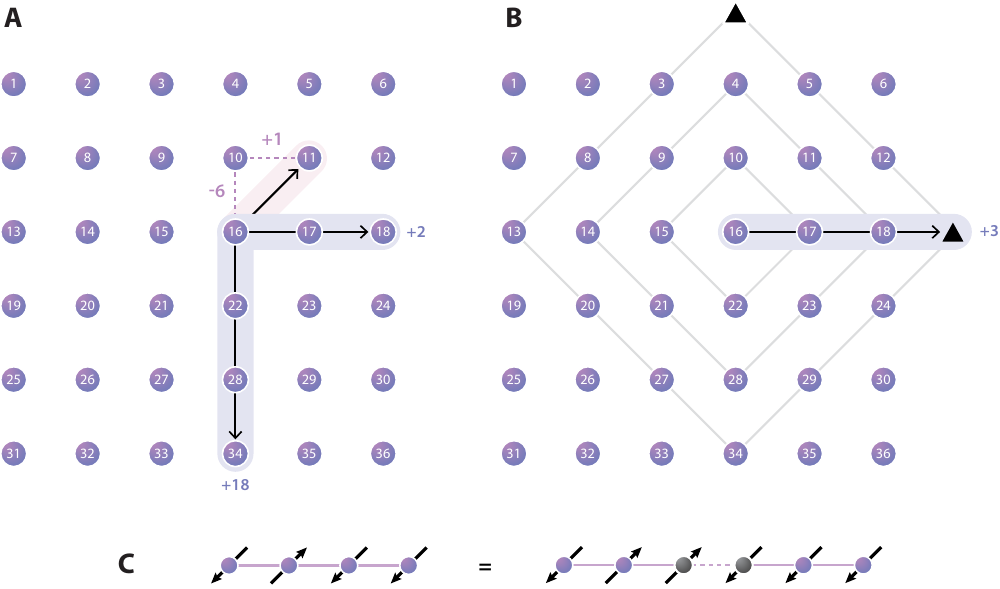}
		\caption{{\bf 2D spin Hamiltonians and effectively 0D spin Hamiltonians.} 
			{\bf A} In 2D spin Hamiltonians, 
			moves (vectors) on the lattice transform to moves along the string by subtracting the index of the source spin from the index  of the target spin. 
			The panel shows a 2D lattice with $n_1=n_2=6$ and that moving three rows down transforms to $+18$ in the string.  
			Moves in multiple dimensions (solid diagonal) can be decomposed into  moves along different dimensions (dashed lines). 
			{\bf B} Finite interaction ranges for $k=1,2,3$ around spin 16 are drawn in light gray. For $k=3$ some moves are not possible because they either point outside of the lattice or   lead to a new row, so the corresponding addresses are replaced with $\missingspinsymbol$. The arrow shows one such case, where moving $+3$ from spin $16$   yields spin $19$, which is not allowed.
			{\bf C}  In an effectively 0D spin Hamiltonian, the spin configuration on the left is equivalent to that on the right, as only the spins on the boundary of the chain contribute to the energy  (see \cref{pro:eff0D}).  
		}\label{fig:2d_enumeration}
\end{figure*}

We will cast any such spin Hamiltonian $H$ as a language $L_H$ (\cref{def:language}), 
and classify the language of certain classes of spin Hamiltonians in the Chomsky hierarchy (\cref{thm:main}). 
To this end, let us now define the classes of 0D, 1D, effectively 0D, and $d$D spin Hamiltonians, and finally the class of  all-to-all spin Hamiltonians.

%
\begin{definition}[0D spin Hamiltonian]\label{def:0D}
	A \emph{0D spin Hamiltonian} is a spin Hamiltonian (\cref{def:spinhamiltonian}) where the interaction structure $\mathcal{I}$ is only defined for a finite number of system sizes $n$.
\end{definition}

In other words,  the domain of a 0D spin Hamiltonian is finite, and hence so is its image. 
From a physical perspective, any such 0D spin Hamiltonian can be mapped to a Hamiltonian on a single (high-dimensional) spin---hence the name. 
From a computer science perspective, this definition highlights the fundamental distinction between finite and infinite. 
In thermodynamic terms, the spin Hamiltonian $H$ is an \emph{extensive} object, because the energy scales with the system size, and  our complexity measure (\cref{thm:main}) captures the complexity of the \emph{scaling} of $H$, i.e.\ the complexity of $H$ at the \emph{intensive} level---but in a 0D spin Hamiltonian there is no distinction between extensive and intensive properties.

%
We now turn to 1D spin Hamiltonians, where the spins are arranged on a 1D lattice, i.e.\ a chain of length $n \in \N$, for all $n$. 

\begin{definition}[1D spin Hamiltonian]\label{def:1D}
	A \emph{1D spin Hamiltonian} is a spin Hamiltonian (\cref{def:spinhamiltonian}) with spins arranged on a 1D lattice 
	and where the interaction structure is a total function such that for all $j\leq n$ and $n\in \N$, 
	\begin{equation}
(n,j) \mapsto \mathcal{I}_{n,j} \subset  [p] \times  \{-k, \ldots , k\}^k
	\end{equation}
	 is a constant function,  i.e.\ $\mathcal{I}_{n,j}$ is independent of both $n$ and $j$.
	We denote the resulting domain [Eq.\ \eqref{eq:domain}] by $\mathcal{D}_{\mathrm{1D}}$.
\end{definition}

In words, the interaction structure allows to choose one of $p$ local Hamiltonians, 
each of which acts on $k$ spins, and which can be at most $k$ positions far away from $j$, i.e.\ in the interval 
$\{-k,-k+1, \ldots ,k\}$. 
The key point is that $\mathcal{I}_{n,j}$ is constant, i.e.\ for every system size and every position in the chain, the set of interactions is the same.  
Recall that every position in the chain $j$ can in fact stand for a non-trivial unit cell; this  allows to describe 1D lattices with a finite periodicity. 
Note also that this definition includes the limiting case $k=1$, in which the spins do not interact with each other---although one can hardly call such a spin Hamiltonian 1D, the image of this $H$ can be unbounded, and thus behave like a non-trivial 1D spin Hamiltonian for the purposes of \cref{thm:main}.

While 1D spin Hamiltonians have an unbounded domain (as they are defined for all $n$) and an unbounded image in general, 
the latter is not the case for  a subset of 1D spin Hamiltonians, which effectively behaves like 0D spin Hamiltonians:


\begin{definition}[Effectively 0D spin Hamiltonian]\label{def:eff0D}
	An \emph{effectively 0D spin Hamiltonian} is a 1D spin Hamiltonian (\cref{def:1D}) with a bounded image.
\end{definition}

In \cref{ssec:eff0D}, we show that the energy of effectively 0D spin Hamiltonians only depends on a finite number of spins at each end of the string, as the energy of spins in-between cancels out. In this sense, they behave as holographic 1D spin Hamiltonians (\cref{fig:2d_enumeration}). 
Henceforth, we call spin Hamiltonians $1$D only if they are not effectively $0$D.

We now turn to spin Hamiltonians on $d$ dimensional lattices.  
Recall that non-rectangular lattices can be seen as rectangular lattices with non-trivial unit cells,  
so we will consider rectangular lattices. 
So let $R = (n_1, n_2, \ldots, n_d)\in \N^d$ be the side lengths of a rectangular lattice in $d\geq2$ dimensions. We enumerate the spins lexicographically along the dimensions, i.e. there are $n_1$ spins in a row, $n_1  n_2$ spins in a plane, $N_e=n_1   \cdots  n_e$ spins in an $e$-dimensional hyperplane and $n = n_1  \cdots n_d$ spins in total 
 (see \cref{fig:2d_enumeration}). 
With this notation we decompose any spin number as
\begin{equation}\label{eq:decompi}
	j = \sum_{i=0}^{d-1} a_i N_i + 1
\end{equation}
where $0 \leq a_i < n_{i+1}$, and where $N_0=1$. 

To have uniform families of Hamiltonians, we define a base  lattice $R_0 = ( \ell_1,  \ldots, \ell_d)$ and scaled versions thereof, 
\begin{equation} 
m R_0 \coloneqq (m \ell_1,  \ldots, m \ell_d) \quad \textrm{ for $m\in \N$}, 
\end{equation}
and denote the set of all such lattices by 
\begin{equation} \label{eq:curlyR}
\mathcal R \coloneqq \bigcup_{m \in \N} m R_0.
\end{equation}  
We now define a $d$D spin Hamiltonian  
based on two conditions: the first  ensures that the spins  be arranged on a lattice $R \in \mathcal{R}$, and the second  that the interaction range  be finite. 

\begin{definition}[$d$D spin Hamiltonian]\label{def:dD}
Let $d\geq 2$ be an integer.  
	Fix a base lattice $R_0 = (\ell_1,\ldots, \ell_d)$ and let $ \mathcal{R}$ be defined as in \eqref{eq:curlyR}. 
	A \emph{$d$D spin Hamiltonian} is a spin Hamiltonian  (\cref{def:spinhamiltonian}) with two properties:
	\begin{enumerate}
		\item $\mathcal{I}$ is defined for system size $n$  if and only if there is a lattice $R=(n_1,   \ldots, n_d) \in \mathcal{R}$ with $n= n_1 \cdots n_d$, and 
		\item There is a subset $V \subseteq [p] \times \left\{ \mathbf{v} \in \Z^d \mid \sum_i|v_i| \leq k \right\}^k$ such that the interaction structure for all $n, j$ is given by 
		\begin{equation}
		\mathcal{I}_{n,j} = \left\{(\alpha, (Z_{n,j}(\mathbf{v}^1), \ldots, Z_{n,j}(\mathbf{v}^k))) \mid (\alpha, (\mathbf{v}^1, \ldots, \mathbf{v}^k)) \in V \right\}
		\end{equation}
		where $Z$ is the function
		\begin{equation}\label{eq:dDZ}
			\begin{aligned}
				Z: \Z\times \Z \times \Z^d &\to \Z_\missingspinsymbol\\
					n,j, \mathbf{v}&\mapsto Z_{n,j}(\mathbf{v}) = 
					\begin{cases}
						\sum_{i = 0}^{d-1} v_i N_i & \forall \, i: \ 0 \leq a_i + v_i < n_{i+1} \\
						\missingspinsymbol & otherwise \\
					\end{cases}
			\end{aligned}
		\end{equation}
		where $a_i$ are given by \cref{eq:decompi}, $n_i$ are the side lengths and $N_i=n_1   \cdots  n_i$.
	\end{enumerate}
	We denote the resulting domain by $\mathcal{D}_{d\mathrm{D}}$.
\end{definition}


In words, $V$ specifies the local Hamiltonian (i.e.\ an element $\alpha  \in [p]$) and a list $\mathbf{v}$ of moves for each address, 
and $Z$ maps these lists to addresses. 
Specifically, the list $\mathbf{v} =( v_0, v_1, \ldots, v_{d-1})$ tells to move 
$v_0$ steps along the first dimension, 
$v_1$ steps along the second dimension, 
and so on  (\cref{fig:2d_enumeration}\textbf{A}), 
and  the total distance covered (i.e.\ the sum $\sum_i |v_i|$) is bounded by $k$ 
(\cref{fig:2d_enumeration}\textbf{B}). 
Since there is a list for each address and there are $k$ such addresses, the corresponding lists are denoted $\mathbf{v}^1, \ldots, \mathbf{v}^k$. 
The function $Z$ maps each list to a distance along the input---for example,  
$1 N_0 = 1$ moves to the next spin in the same row, 
$2 N_1$ moves two rows forward, and $(-1) N_2$ moves one plane backwards.
In addition, some moves are not possible because the spin lies at the edge. 
Consider for example spin $n_1$, the last spin in the first row; adding $1$ would bring us to the first spin in the second row which has a distance of $1+n_1$ from spin $n_1$, which is not allowed. That is, no move $v_j$ is allowed to jump to the next hyperplane (\cref{fig:2d_enumeration}\textbf{B}); 
the addresses corresponding to such a move are  replaced by the symbol $\missingspinsymbol$.
As a final remark, note that $d$D spin Hamiltonians have an unbounded domain, but are only defined for sizes $n$ corresponding to lattices in $\mathcal R$.

Finally, we  consider an even more general class of Hamiltonians, namely \emph{all-to-all} spin Hamiltonians, 
where the only ``restriction" is that each of the local Hamiltonians act on $k$ spins  
but these can be placed anywhere, i.e.\ there is no notion of physical locality with respect to a $d$ dimensional lattice. 
To exclude trivial cases we assume that $k\geq 2$, since for  $k=1 $ there are no interactions among spins and we recover a 1D spin Hamiltonian. 
We will show that its complexity is the same as that of $dD$ spin Hamiltonians (\cref{thm:main}), 
showcasing that this very loose notion of $k$-locality does not increase the complexity of $L_H$.   

To be precise, an all-to-all spin Hamiltonian contains   interactions among all subsets of $k$ spins from a total of $n$ spins, of which there are  $n \choose k$ (\cref{fig:main_result}). 
To formalise this, 
we associate to each spin $j$ 
all local interactions between this spin and any combination of $k-1$ spins with a larger index. 
There are 
\begin{equation}
c_j \coloneqq {{n-j}\choose{k-1}} 
\end{equation} 
many such combinations. 
We then replace the interaction structure (\cref{def:localinteraction})  with a set of lists of local interactions:
\begin{equation}
	\mathcal{J}_{n,j} = \left\{ \left( (\alpha_1, A_j^1),  \ldots, (\alpha_{c_j}, A_j^{c_j} ) \right) \mid \alpha_l \in [p] \right\}   
\end{equation}
where $A_j^l $ is a list of  addresses, namely
\begin{equation}\label{eq:order}
A_j^l = \left(0, a_1,  \ldots, a_{k-1} \right)
\end{equation}
where $0< a_1< \ldots<a_{k-1}\leq n-j$.  
These lists of addresses are ordered lexicographically, so that $A_j^l$ is the $l$th interaction associated to spin $j$. 
An all-to-all spin Hamiltonian only differs from a spin Hamiltonian  (\cref{def:spinhamiltonian})  in an  additional sum over interaction sets:

\begin{definition}[All-to-all spin Hamiltonian]\label{def:all-to-all}
Let $q,p$ be positive integers and let $k\geq2$.  Let the domain of an all-to-all spin Hamiltonian be 
\begin{equation}
	\mathcal{D}_{\mathrm{all}}= 
	\bigcup_{n \geq 1}
	\left\{
		\left(
			s_1, J_1, s_2, J_2, \ldots , s_n, J_n
		\right)
		\mid s_j\in \Sigma_q , J_j \in \mathcal{J}_{n,j}
	\right\} 
\end{equation}
An \emph{all-to-all spin Hamiltonian} $H$ is given by a domain $\mathcal{D}_{\mathrm{all}}$ and the map
\begin{align}
	\begin{split}
	H :\mathcal{D}_{\mathrm{all}} &\to\Z \\
	 (s_1, J_1,  \ldots, s_n, J_n)& \mapsto  
	 \sum_{j=1}^{n} \sum_{(\alpha, A) \in J_j} h_{\alpha}( (s_{j+l})_{l \in A })
	\end{split}
\end{align}
\end{definition}


So far, all spin Hamiltonians have been defined with open boundary conditions. 
To account for periodic boundary conditions in 1D spin Hamiltonians, in Eq.\ \eqref{eq:H} we redefine the sum as  $j+l = (j+l-1) \textrm{mod}(n) -1$ in $s_{j+l}$. 
For $d$D spin Hamiltonians, one can choose either periodic or open boundary conditions along any dimension, i.e.\ for any pair of opposite $(d-1)$-dimensional edges. 
Moreover, a rotation can be applied around the edges with periodic boundary conditions (resulting in \emph{quasi-periodic} boundary conditions), so that a $2$D lattice could become a torus or a M\"obius strip. 
In order to account for any such boundary conditions in $d$D spin Hamiltonians, 
we redefine the sum for the dimensions with periodic boundary conditions  similarly as above, 
and modify $Z$ [Eq.\ \eqref{eq:dDZ}] and $H$ accordingly.

\subsection{The language of a spin Hamiltonian \label{ssec:language}}

We now define the language of a spin Hamiltonian, where language stands for formal language throughout this work. 
The language $L_H$ of a spin Hamiltonian  $H$ is the set of all pairs $(x,H(x))$ such that $x\in \mathcal{D}$. 
More precisely, the words of the language consist of an encoding $e(x)$ of an element $x$ of the domain together with a unary encoding $u(H(x))$ of the energy.

Let the alphabet of auxiliary symbols be denoted as 
\begin{equation}
\Gamma \coloneqq  \{\lemo,\rimo,\square, \diamond, \missingspinsymbol,\addressdelimiter \},  
\end{equation}
where none of these symbols is in $\Sigma_q$. 
A \emph{unary encoding} of integers is a map
 $u:\Z \rightarrow \{\diamond, \square \}^*$, so that 
 \begin{equation}\label{eq:u}
u(n)=\begin{cases}
\diamond^{n} & n>0\\
\square^{|n|} & n<0\\
\epsilon & n=0\\
\end{cases}
\end{equation}
where $\epsilon$ is the empty string.  

Define the encoding of a \emph{local interaction} $I$ as the  map
\begin{equation}
	\begin{split}
		\gamma: \textrm{LocInt} &\to ([p] \cup \Gamma )^*\\
			(\alpha, A) = (\alpha, (a_1,  \ldots, a_k))&\mapsto \alpha \lemo u(a_1) \addressdelimiter  \ldots \addressdelimiter u(a_k) \rimo 
	\end{split}
\end{equation}
where $\lemo, \rimo$ are used as opening and closing  ``parentheses" in the string. In addition, we define $u( \missingspinsymbol)=\missingspinsymbol$. 
Finally the encoding of the domain of a spin Hamiltonian is defined as
\begin{equation} \label{eq:e}
	\begin{split}
		e: \mathcal{D} &\to \left(\Sigma_{q } \cup [p] \cup \Gamma  \right)^*\\
		(s_1, I_1 \ldots, s_n, I_n) &\mapsto s_1 \gamma(I_1)\ldots s_n \gamma(I_n)  
	\end{split}
\end{equation}

\begin{definition}[Language of a spin Hamiltonian]\label{def:language} 
Let $H$ be a spin Hamiltonian (\cref{def:spinhamiltonian}) with domain $\mathcal{D}$.  
Let $e$ be the encoding defined in \eqref{eq:e}, and $u$ the unary encoding defined in \eqref{eq:u}.
The \emph{language of $H$}, denoted $L_H$, is given by
\begin{equation}
\begin{split}
L_{H} = \{  e(x) \sun u(H(x))\newmoon  \mid x\in \mathcal{D}\} 
\end{split}
\end{equation}
\end{definition}

The symbol $\sun$ separates the input and  output of $H$,  whereas $\newmoon$ marks the end of the string, and none of these symbols is part of the image of $e$. 
For the language of an all-to-all spin Hamiltonian, $\gamma$ is replaced by an encoding  of lists of local interactions, i.e.\ 
 $J = (I_1, \ldots, I_m) $ is encoded as the string $\gamma(I_1) \ldots \gamma(I_m)$.

\section{Classification in the Chomsky hierarchy \label{sec:mainresult}}

We can classify the languages of spin Hamiltonians in the Chomsky hierarchy (\cref{fig:chomsky_hierarchy}). 

\begin{theorem}[Classification in the Chomsky hierarchy -- Main result]\label{thm:main}
    \begin{enumerate}[before=\leavevmode, label=(\roman*),ref=(\roman*)]
    \item \label{thm:0D} 
    If $H$ is a 0D spin Hamiltonian (\cref{def:0D}) or an effectively 0D spin Hamiltonian (\cref{def:eff0D}) then  $L_H$ (\cref{def:language}) is regular. 
    
    \item \label{thm:1D} 
    If $H$ is a 1D spin Hamiltonian (\cref{def:1D}) then $L_H$ is deterministic context-free and not regular.
    
    \item \label{thm:dD} 
    If $H$ is a $d$D spin Hamiltonian with $d \geq 2$ (\cref{def:dD}) 
    then $L_H$ is context-sensitive and not context-free. 
    
    \item \label{thm:all}  If $H$  is an all-to-all spin Hamiltonian (\cref{def:all-to-all}) then $L_H$ is context-sensitive and not context-free.
    \end{enumerate}
    In addition, \ref{thm:0D}, \ref{thm:1D} and  \ref{thm:dD} hold for open, periodic or quasi-periodic boundary conditions along any dimension. 
    \end{theorem}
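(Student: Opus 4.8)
The plan is to establish each classification by a matching pair: an upper bound given by an explicit automaton of the type indicated in \cref{fig:main_result}, and a lower bound ruling out membership in a simpler class. For the upper bounds I would build, respectively, a finite-state automaton (part (i)), a deterministic pushdown automaton (part (ii)), and a linear bounded automaton (LBA, parts (iii) and (iv)); for the lower bounds I would invoke the Myhill--Nerode theorem to rule out regularity and Parikh's theorem to rule out context-freeness. All upper-bound machines share the same skeleton: read $e(x)$, compute $H(x)$, and check it against the output block $u(H(x))$ delimited by $\sun$ and $\newmoon$.

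For part (i), a 0D Hamiltonian has a finite domain (finitely many sizes $n$, and for each one finitely many configurations since $s_j$ ranges over $\Sigma_q$ and $I_j$ over the finite set $\mathcal{I}_{n,j}$), so $L_H$ is finite and hence regular. For the effectively 0D case I would use \cref{pro:eff0D}: since the energy depends only on a bounded number $r$ of spins at each end of the chain, a DFA can keep the first $r$ spins in its control and the last $r$ spins through a sliding window of fixed width, check local syntactic validity of each interaction block (a regular condition, as the 1D interaction structure is constant), compute the bounded energy at $\sun$, and verify the bounded-length output $u(H(x))$ symbol by symbol.

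For part (ii) I would construct a DPDA that reads $e(x)$ left to right while accumulating $H(x)$ on the stack as a signed unary counter over $\{\diamond,\square\}$, then matches this counter against $u(H(x))$ after $\sun$. Because a local Hamiltonian at site $j$ may read spins up to $k$ positions ahead and behind, the DPDA keeps a window of the last $2k+1$ (spin, interaction) pairs in its finite control and commits the contribution of site $j$ only once site $j+k$ has been read, flushing the final $k$ sites (with $\missingspinsymbol$ for out-of-range neighbours) at $\sun$; each committed contribution is bounded, so it changes the counter by pushing or popping boundedly many symbols, with a sign flip handled at the bottom marker. Matching against the output is deterministic given the explicit end-marker $\newmoon$. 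Non-regularity follows from Myhill--Nerode: a genuine 1D Hamiltonian is not effectively 0D, so its image is unbounded, whence the prefixes $e(x)\sun$ realise infinitely many distinct energies $H(x)$, and any two with $H(x_1)\neq H(x_2)$ are separated by the suffix $u(H(x_1))\newmoon$, giving infinitely many Nerode classes. I expect this DPDA---in particular the deterministic signed counter with sign changes, together with the forward-looking buffering and the boundary bookkeeping---to be the main obstacle.

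For parts (iii) and (iv) the upper bound is an LBA, which has space linear in $|w|$ to parse the word, reconstruct the lattice data (for $d$D, find $m$ with $n=m^d\prod_i\ell_i$ and decompose each index via \cref{eq:decompi}), evaluate each address through $Z_{n,j}$, verify membership in the prescribed interaction structure, accumulate $H(x)$, and compare with the output block; all intermediate quantities fit in linear space, so $L_H$ is context-sensitive. For the lower bound I would apply Parikh's theorem: if $L_H$ were context-free, its Parikh image and every projection thereof would be semilinear. In the $d$D case the number of spin symbols must lie in $S=\{m^d\prod_i\ell_i : m\in\N\}$, whose counting function is sublinear for $d\geq2$, so $S$ is not eventually periodic and hence not semilinear. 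In the all-to-all case the counts of spin symbols and of interaction openings $\lemo$ obey $\#\lemo=\binom{\#\mathrm{spins}}{k}$, a degree-$k\geq2$ relation whose graph is not semilinear. Either way the Parikh image fails to be semilinear, so $L_H$ is not context-free. Finally, all constructions are insensitive to the boundary conditions: open, periodic and quasi-periodic conditions alter only the treatment of a bounded edge region---absorbed into the finite control of the DFA or DPDA, or handled directly on the LBA tape---while the Myhill--Nerode and Parikh arguments depend only on bulk quantities (energies and symbol counts), which establishes the final claim.
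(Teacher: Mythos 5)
Your upper bounds coincide with the paper's own constructions: a DFA holding the first and last $k$ spins in its control for the (effectively) 0D case, a DPDA with a $(2k+1)$-site window in the finite control and a signed unary stack counter (sign flip at the bottom marker) for 1D, and a multi-track LBA that reconstructs the side lengths, evaluates addresses via $Z_{n,j}$, and accumulates the energy for the $d$D and all-to-all cases. Where you genuinely depart from the paper is in the lower bounds. For non-regularity in part \ref{thm:1D}, the paper pumps the unary energy block via the pumping lemma for regular languages, while you use Myhill--Nerode on the prefixes $e(x)\sun$, distinguished by suffixes $u(H(x))\newmoon$; both arguments hinge on the same fact, namely that the image of a genuine 1D Hamiltonian is unbounded. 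For non-context-freeness in parts \ref{thm:dD} and \ref{thm:all}, the paper runs the contrapositive pumping lemma with a four-way case analysis on where the pumped factors $v,x$ sit relative to $\sun$, whereas you invoke Parikh's theorem: in $d$D the projection of the Parikh image onto the spin count is $\{m^d \ell : m \in \N\}$, which is infinite with sublinear density and hence not semilinear; in the all-to-all case the projection onto (spin count, $\lemo$ count) is the graph of $n \mapsto \binom{n}{k}$ with $k\geq 2$, which is strictly convex and hence contains no infinite set of collinear points, so it is not semilinear either. Both routes ultimately rest on the same combinatorics (unbounded gaps between valid lattice sizes; superlinear growth of the number of interactions), but yours replaces the decomposition case analysis by a single appeal to semilinearity and is automatically robust under boundary conditions and encoding details, since it only sees symbol counts; the paper's pumping arguments are more elementary and self-contained. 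One caution on your closing claim: for an effectively 0D Hamiltonian with \emph{periodic} boundary conditions, ``the edge region is absorbed into finite control'' is not by itself enough, because \cref{pro:eff0D} is proved for open boundaries; the paper handles this case by a doubling argument (cutting the ring and inserting a copy) showing that a bounded image forces $H \equiv 0$, and you need that (or a periodic analogue of \cref{pro:eff0D}) before your DFA's hardwired energy table is even well defined.
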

        
    Recall that all statements hold for a non-trivial unit cell, so e.g.\ \ref{thm:dD} holds for non-rectangular lattices. 
    We prove this theorem in \cref{ssec:proofofmain}.

Describing  $H$ as a formal language $L_H$ is equivalent to representing  $H$ as an automaton. 
The local interactions of $H$ are reflected in the grammar of $L_H$, or the transition rules of the automaton. 
The  classification in the Chomsky hierarchy  gives the least complex automaton capable of recognising $L_H$, 
that is, that can tell whether a spin configuration $x$ and energy $E$ are `correct', i.e.\ $H(x) =E$ (\cref{fig:main_result}). 
We find that all languages $L_H$ can be recognised by automata which are weaker than Turing machines (despite our very general definition of a spin Hamiltonian), suggesting that universal spin models are weaker than universal Turing machines, since the 2D Ising model with fields is a universal spin model \cite{De16b}.

The classification relies on the following ideas: 
A 0D spin Hamiltonian gives rise to a finite language, which  is trivially regular, and in effectively 0D spin Hamiltonians  the energy only depends on a finite number of spins at each boundary, which also results in a regular language.  
A 1D spin Hamiltonian has a fixed and finite interaction range $k$, 
so a deterministic pushdown automaton (DPDA) can store the state of the previous $k$ spins in the head, 
and the energy accumulated so far in the stack. 
Once the entire spin configuration is read, the DPDA reads the energy in the tape and subtracts it from the stack, and accepts if and only if they coincide---rendering $L_H$  a deterministic context-free language. 
Finally, a 2D (or higher D) spin Hamiltonian has a fixed and finite interaction range on the lattice, but  when the spin configuration is cast as a string, there are interacting spins  arbitrarily far away from each other in the string. 
The jump in complexity from 1D to 2D is due to this unbounded distance. 
For this reason we need a linear bounded automaton (LBA), rendering $L_H$ a context-sensitive language. 
The same is true for $H$ with all-to-all $k$-body interactions (\cref{fig:main_result}).

\section{Comparison to the ground state energy problem \label{sec:comparisonGSE}}

We now compare the  complexity measure of \cref{thm:main} with the usual complexity measure, which is the computational complexity of the ground state problem (\cref{ssec:computcompl}).  
We compare them in detail for the most paradigmatic spin model, the Ising model  (\cref{ssec:Ising}).

\subsection{Comparison with computational complexity \label{ssec:computcompl}}

Usually, the complexity of a classical spin Hamiltonian is measured by  the running time of a Turing machine that searches for its ground state; more precisely, by the computational complexity of the ground state energy problem (GSE),  
which asks:  
\begin{quote}
Given a classical spin Hamiltonian defined on $n$ spins, $H_n$, and a number $K$, is there a spin configuration with energy below $K$? 
\end{quote}
The set of yes instances to this problem defines the formal language
$ L_{\textrm{GSE}}$, which contains the set of all pairs $(H_n,K)$ such that there exists $x $ such that  $ H_n(x) <K $. 

In contrast,  our language $L_H$ is the set of all input--output pairs of a  given $H$;  
in other words, it contains the set of yes-instances of the problem 
\begin{quote}
Given a spin configuration and a number, is this a valid spin configuration and does this number correspond to the energy of the spin configuration under $H$?
\end{quote}

Another difference consists of the fact that, in computational complexity theory, the automaton is fixed---it is  a Turing machine---, 
and one studies how much time or space this machine needs to solve a problem. 
In contrast, in our complexity measure,  the variable is the type of automaton, 
and we look for the simplest kind of automaton recognising the language. 
Note that we do not measure how long the automaton needs to recognise the language, as this time does not play any role in our complexity measure.

Despite these differences, we now compare the two complexity measures by 
studying the computational complexity of $L_H$. 
We find a different classification than that of \cref{thm:main}:  
the distinction between 1D and $d$D spin Hamiltonians  is washed out, 
and the effectively 0D spin Hamiltonian does not have the  same complexity as the 0D case. 
Explicitly, let $\textsf{P}$ be complexity class of all problems which can be solved in polynomial-time by a deterministic Turing machine, and $\textsf{LIN}$ the subclass of those problems that can be solved in linear time by the same machine. 

\begin{proposition}[Computational complexity of $L_H$] \label{pro:computcompl}
\begin{enumerate}[before=\leavevmode, label=(\roman*),ref=(\roman*)]

\item \label{pro:computcompl:0D} 
If $H$ is a 0D spin Hamiltonian (\cref{def:0D}) then  $L_H$ (\cref{def:language}) can be recognised in constant time.  

\item \label{pro:computcompl:eff0D} 
If $H$ is an effectively 0D spin Hamiltonian (\cref{def:eff0D}) then  $L_H \in \textsf{LIN}$.

\item \label{pro:computcompl:1D} 
If $H$ is a 1D spin Hamiltonian (\cref{def:1D}) then  $L_H \in \textsf{P}$. 

\item \label{pro:computcompl:dD} 
If $H$ is a $d$D spin Hamiltonian with $d \geq 2$ (\cref{def:dD}) 
then $L_H\in \textsf{P}$. 

\item \label{pro:computcompl:all} 
If $H$  is an all-to-all spin Hamiltonian (\cref{def:all-to-all}) then $L_H \in \textsf{P}$.
\end{enumerate} 
In addition, \ref{pro:computcompl:0D}, \ref{pro:computcompl:eff0D}, \ref{pro:computcompl:1D} and  \ref{pro:computcompl:dD} hold for open, periodic or quasi-periodic boundary conditions along any dimension.
\end{proposition}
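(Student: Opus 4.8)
The plan is to exhibit, for each class, a deterministic Turing machine recognizing $L_H$ and to bound its running time as a function of the input length $|w|$. Any such machine must perform four tasks on a word $w$: (a) check that $w$ has the shape $e(x)\sun u(E)\newmoon$ for some candidate domain element $x$ and integer $E$; (b) verify that $x\in\mathcal{D}$, i.e.\ that the spin values lie in $\Sigma_q$ and that each local interaction belongs to the prescribed set $\mathcal{I}_{n,j}$; (c) compute $H(x)=\sum_j h_{\alpha_j}((s_{j+l})_{l\in A_j})$; and (d) accept iff $E=H(x)$. Because there are finitely many local Hamiltonians and finitely many configurations on $k$ sites, each summand $h_{\alpha_j}(\cdot)$ is bounded in absolute value by a constant, so $|H(x)|=O(n)$ for the lattice classes and $|H(x)|=O(n^k)$ for all-to-all; this controls the length of the unary block $u(E)$ and hence $|w|$ in terms of the system size $n$.

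For \ref{pro:computcompl:0D}, the domain of a 0D spin Hamiltonian is finite (\cref{def:0D}), so $L_H$ is finite; reading a bounded number of symbols and comparing against the finite list of accepted words decides membership in constant time. For \ref{pro:computcompl:eff0D}, I would invoke the characterization of effectively 0D spin Hamiltonians (\cref{pro:eff0D}): the energy depends on only a bounded number of boundary spins, so $|H(x)|=O(1)$, the unary block has constant length, and $|w|=O(n)$. Since the interactions are those of a 1D Hamiltonian, both domain membership and the boundary energy are checked in a single left-to-right sweep with a constant-size finite control, giving $L_H\in\textsf{LIN}$. For \ref{pro:computcompl:1D}, a 1D spin Hamiltonian has fixed range $k$, so the machine keeps a sliding window of the last $k$ spins and accumulates the running sum of the bounded summands; verifying each local interaction against the constant set $\mathcal{I}_{n,j}$ and evaluating $H(x)$ proceed in one sweep, after which comparing the accumulated value (of magnitude $O(n)$) against the unary block $u(E)$ of length $O(n)$ is polynomial, so $L_H\in\textsf{P}$.

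For \ref{pro:computcompl:dD}, the new ingredient is that the addresses are long-range strides $Z_{n,j}(\mathbf{v})$ of magnitude up to $O(n)$ written in unary, so $|w|=O(n^2)$. The machine first counts the spins to obtain $n$, then determines the lattice $R\in\mathcal{R}$ by testing whether $n/(\ell_1\cdots\ell_d)$ is a perfect $d$-th power and, if so, recovering the scale $m$ and the side lengths; from these it reconstructs the decomposition \eqref{eq:decompi} and the strides $N_i$. For each $j$ it verifies that the listed addresses coincide with admissible values of $Z_{n,j}$ for some $(\alpha,(\mathbf{v}^1,\ldots,\mathbf{v}^k))\in V$, locates the $k$ partner spins by following the unary offsets along the tape, evaluates $h_{\alpha_j}$, and updates the running sum; all steps are polynomial in $n$, hence in $|w|$, so $L_H\in\textsf{P}$. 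Part \ref{pro:computcompl:all} is analogous: the hockey-stick identity gives $\sum_j\binom{n-j}{k-1}=\binom{n}{k}$ interactions, so $|w|=O(n^{k+1})$ and $|H(x)|=O(n^k)$; verifying each list against $\mathcal{J}_{n,j}$ (in particular that the addresses are strictly increasing as in \eqref{eq:order}) and summing the $\binom{n}{k}$ bounded contributions is again polynomial.

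I expect the main obstacle to be part \ref{pro:computcompl:dD}: correctly recovering the lattice geometry from the bare size $n$ and then certifying that every address is a legal value of $Z_{n,j}$, including the replacement by $\missingspinsymbol$ precisely when a move would cross a hyperplane boundary or leave the lattice. This bookkeeping must be shown to run in polynomial time for every admissible boundary condition; for periodic or quasi-periodic boundaries the only change is that $Z_{n,j}$ and the index $j+l$ are taken modulo the relevant side length, which does not affect the polynomial bound. The remaining parts are then routine once the input-length estimates above are in place.
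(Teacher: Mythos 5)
Your proposal is correct and follows essentially the same route as the paper: the paper proves this proposition by observing that the automata constructed for \cref{thm:main} (DFA, DPDA, LBA) are special cases of, or simulable by, deterministic Turing machines, and then bounding their running times subroutine by subroutine (pattern validation, counting the system size, computing side lengths, validating interactions, evaluating local Hamiltonians, comparing energies) exactly as you outline, including the $O(n^2)$ input-length bound in the $d$D case, the $\binom{n}{k}$ count in the all-to-all case, and the modular-arithmetic remark for periodic and quasi-periodic boundaries. The only cosmetic difference is that the paper explicitly simulates the 1D DPDA's stack on a Turing-machine track (giving $O(N^2)$) where you describe the sliding-window machine directly; the substance is identical.
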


The fact that $L_H \in \textsf{P}$ for all classes of spin Hamiltonians means that the graph of $H$ (i.e.\ the set of all input--output pairs of $H$) can be ``easily'' verified, and this essentially follows from the fact that $H$ is well-behaved. 
We prove this proposition in \cref{ssec:proofofcomputcompl}.

\subsection{The complexity of the Ising model \label{ssec:Ising}}

We now consider the most paradigmatic spin model, the Ising model, and compare our complexity measure to the computational complexity of its ground state energy problem. The Ising model  without fields is defined as 
\begin{equation}
H  (s_{1}, \ldots,s_n) =  \sum_{\langle i,j\rangle} J_{i,j}s_{i} s_j    
\end{equation}
where $\langle i,j\rangle$ denotes that $i,j$ are nearest neighbors in a given lattice, for all $n$.   
For GSE, the input to the problem is the set of coupling strengths $\{J_{i,j}\}$, the lattice and $K$, and $L_{\mathrm{GSE}}$ is given by the set of yes-instances to this problem.  
For our complexity measure, the elements of $L_H$ are given by 
the spin configuration, 
the local Hamiltonians (which encode the coupling strengths), 
the addresses (which encode the lattice) and the corresponding energy. 

One can  thus measure  the complexity of the Ising model with the computational complexity of recognising  $L_{\mathrm{GSE}}$  or with the classification  of $L_H$ in the Chomsky hierarchy. How do they compare? 
They exhibit a different threshold between `easy' and `hard': 
For GSE, the 1D and 2D Ising model are  easy  and 3D is hard  (in \textsf{P} and \textsf{NP}-complete\footnote{\textsf{P} and \textsf{NP} are the class of decision problems that can be recognised in polynomial time by a deterministic and non-deterministic Turing machine, respectively. A problem is \textsf{NP}-complete if it is in \textsf{NP}  and there is a polynomial-time reduction from any problem in \textsf{NP} to it.} \cite{Bar82}, respectively), 
whereas in our measure, 1D is easy  and 2D and 3D are  hard  (deterministic context-free and context-sensitive, respectively; see \cref{tab:Ising}). 
Note that the 1D Ising model without fields cannot be effectively 0D. 

\begin{table*}[th]\centering
	\begin{tabular}{|l|c|c|c|}
	\cline{2-4}
	\multicolumn{1}{c|}{} & 
	\multicolumn{1}{c|}{Complexity of $L_{H}$} & 
	\multicolumn{1}{c|}{Computational}  & 
	\multicolumn{1}{c|}{Computational}  
	\\
	\multicolumn{1}{c|}{} & 
	\multicolumn{1}{c|}{as a language} & 
	\multicolumn{1}{c|}{complexity of $L_{H}$}  & 
	\multicolumn{1}{c|}{complexity of $L_{\mathrm{GSE}}$ }  
	\\
	\hline 	
	1D Ising  &
	deterministic context-free & 
	in \textsf{P}  & 
	in \textsf{P}
	\\
	\hline
	2D Ising  & 
	context-sensitive& 
	in \textsf{P}  &
	in \textsf{P}
	\\
	\hline
	3D Ising  & 
	context-sensitive& 
	in \textsf{P}  &
	 \textsf{NP}-complete 
	\\
	\hline 
	\end{tabular}
	\caption{\textbf{Complexity of the Ising model.}  
	The Ising model without fields shows different easy-to-hard thresholds for the different  complexity measures. 
	If we measure the complexity of $L_H$ as a language, it is deterministic context-free for the 1D Ising model, and context-sensitive for the 2D and 3D Ising model (\cref{thm:main}). 
	The computational complexity of $L_H$ is in \textsf{P} for the 1D, 2D and 3D Ising model (\cref{pro:computcompl}).
		The computational complexity of recognising $L_{\textrm{GSE}}$  of the Ising model without fields is in  \textsf{P} if the model is defined in 1D or 2D, and  \textsf{NP}-complete if defined in 3D.
	}
	\label{tab:Ising}
\end{table*}

\section{The dependence on the encoding \label{sec:encoding}}

How unique is the complexity measure of $H$ provided by \cref{thm:main}? 
Once $L_H$ is fixed, its classification in the Chomsky hierarchy is unique, 
but $L_H$ depends on the encoding of $(x, H(x))$ as a string, which is not unique. 
For example, in the encoding presented in \cref{ssec:language}, 
a $d$D dimensional array of spins is cast as a 1D string,  
local interactions are part of the domain, and 
the energy is expressed in unary. 
Since we want to capture the complexity of $H$, 
it is meaningful to choose the encoding of $H$ that gives rise to the least complex $L_H$. 
Here we investigate the dependence of $L_H$ on the encoding of the energy. More specifically, we consider a \emph{binary} encoding of the energy, instead of a unary one, and prove that \cref{thm:main} only changes in that the language of 1D spin Hamiltonians is no longer deterministic context-free, but context-sensitive (\cref{pro:binary}).  In other words, a binary encoding of the energy \emph{increases} the complexity of the corresponding language in some cases---intuitively, because it is more difficult to process a binary than a unary number.
This showcases that the unary encoding of the energy is preferable. 

To make this precise, define the \emph{binary encoding} $b$ as the map 
\begin{equation}\label{eq:b}
\begin{split}
b:\Z&\rightarrow \{+,-\} \times \{0,1\}^*\\
n& \mapsto b(n) = \pm c_1\dots c_m
\end{split}
\end{equation}
where $m=\lfloor \log_2(|n|) \rfloor+1$ and $|n|=\sum_{l=1}^m c_l2^{l-1}$ and where the sign symbol is chosen according to the sign of $n$.
Note that we are slightly abusing of notation, as $\{0,1\}$ represent symbols of an alphabet in the first line and 
 integers in the second line of \eqref{eq:b}.

\begin{definition}[Language of a spin Hamiltonians with binary encoding] \label{def:lanbin}
Let $H$ be a spin Hamiltonian (\cref{def:spinhamiltonian}) with domain $\mathcal{D}$. 
Let $e$ be the encoding defined in \eqref{eq:e}, and $b$ the binary encoding defined in \eqref{eq:b}. 
The  \emph{language of $H$ with binary encoding}, denoted $L_{H}^{\mathrm{b}}$, is given by 
\begin{equation}
\begin{split}
	L_{H}^{\mathrm{b}}= \{  e(x) \sun b(H(x))\newmoon  \mid x\in \mathcal{D}\}
\end{split}
\end{equation}
\end{definition}

\begin{proposition}[Classification in the Chomsky hierarchy with binary encoding]
\label{pro:binary} 
Let $H$ be a spin Hamiltonian and $L_H^{\mathrm{b}}$  its language with binary encoding (\cref{def:lanbin}). 
Then \cref{thm:main} holds for $L_H^{\mathrm{b}}$ instead of $L_H$ except if $H$ is a 1D spin Hamiltonian, in which case $L_H^{\mathrm{b}}$ is context-sensitive and not context-free.  
\end{proposition}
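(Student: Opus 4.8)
The proposition claims that Theorem~\ref{thm:main} carries over verbatim to the binary encoding, with the single exception that 1D spin Hamiltonians jump from deterministic context-free to context-sensitive (and not context-free). My plan is to treat the five cases of the theorem and show that the only genuinely new work is in the 1D case; everything else follows from near-identical arguments to those already used for $L_H$.

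**The cases that transfer for free.** First I would handle the 0D and effectively 0D cases (\ref{thm:0D}). For 0D the domain is finite, so $L_H^{\mathrm{b}}$ is finite and thus regular regardless of how the energy is encoded. For effectively 0D, the energy is bounded, so $b(H(x))$ ranges over a finite set of strings; a finite-state automaton can still track the relevant boundary spins and emit the correct binary energy string, so regularity survives. Next I would address the context-sensitive cases (\ref{thm:dD} and \ref{thm:all}). For the upper bound, a linear bounded automaton (LBA) can compute the energy on its tape and, since the binary representation of an energy bounded by $O(n)$ in magnitude has length $O(\log n)$, this is comfortably within linear space; in fact binary is \emph{more} space-efficient than unary, so the LBA construction from \cref{ssec:proofofmain} adapts immediately. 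For the lower bound (not context-free), I would reuse the pumping-type argument from the original proof: the structural obstruction---interacting spins arbitrarily far apart in the string---is a property of the input encoding $e(x)$, not of the energy encoding, so it is untouched by replacing $u$ with $b$.

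**The crux: the 1D case.** The substantive claim is that $L_H^{\mathrm{b}}$ for 1D Hamiltonians is context-sensitive but \emph{not} context-free, in contrast to the deterministic context-free classification for $L_H$. The upper bound is easy: 1D is a special case of the general construction, so an LBA recognizes it, giving context-sensitivity. The hard part is proving $L_H^{\mathrm{b}}$ is \emph{not} context-free. Here I expect the main obstacle to lie. The intuition, as the text suggests, is that verifying a binary energy against the accumulated interaction sum requires comparing the binary output to a quantity that grows with the input length, and a pushdown stack cannot perform binary arithmetic (carry propagation) while matching against a separately encoded binary string. Concretely, I would fix a simple 1D model---say one where the energy equals the number of spins in some fixed state, so that $H(x)$ ranges over $\{0,1,\dots,n\}$---and intersect $L_H^{\mathrm{b}}$ with a regular language to isolate a sublanguage resembling $\{w \,\sun\, b(|w|_\sigma) \newmoon\}$, where $|w|_\sigma$ counts a particular symbol. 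Since context-free languages are closed under intersection with regular languages, it suffices to show this sublanguage is not context-free.

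**Executing the non-context-free argument.** To finish, I would apply the pumping lemma for context-free languages (or Ogden's lemma, likely needed because of the positional sensitivity of binary digits) to this isolated sublanguage. The key quantitative fact is that the binary encoding $b(N)$ changes \emph{nonlinearly and nonlocally} as $N$ varies: incrementing $N$ past a power of two flips many digits at once, so the map $n \mapsto b(\text{energy})$ is not a ``nice'' linear function of the input length that a context-free grammar could track by pumping matched pairs. I would select a pumping candidate whose energy is, say, $2^r - 1$ (all ones) or $2^r$ (one followed by zeros), where any pumping of the input length forces a corresponding but incompatible change in the binary string---the pumped input length $N + ip$ cannot have binary encoding obtained by pumping a fixed substring of $b(N)$, because binary length grows logarithmically while the input grows linearly, and the digit pattern is not eventually periodic in $N$. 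The mismatch between the linear growth of the pumped input and the logarithmic-length, digit-chaotic growth of the binary output is precisely what fails the pumping condition, establishing that $L_H^{\mathrm{b}}$ is not context-free. This contrasts sharply with the unary case, where the energy string length grows linearly and can be matched by the stack, which is exactly why the complexity jumps.
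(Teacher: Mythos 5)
Your treatment of the 0D, effectively 0D, $d$D and all-to-all cases matches the paper's: those parts transfer with only a modified energy-comparison step (the paper adds a binary-conversion subroutine to the LBA's \textbf{Validate energy}, and lets the effectively 0D DFA store $b(H(x))$ in its state), and the non-context-freeness proofs for $d$D and all-to-all indeed carry over verbatim because they never touch the energy encoding. The genuine gap is in what you yourself identify as the crux. The proposition is universally quantified: \emph{every} 1D spin Hamiltonian (i.e.\ every one that is not effectively 0D) must have $L_H^{\mathrm{b}}$ not context-free. Your plan---``fix a simple 1D model, say one where the energy equals the number of spins in some fixed state,'' then intersect with a regular language and pump---proves only that \emph{there exists} a 1D Hamiltonian whose binary language is not context-free. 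A general 1D Hamiltonian need not have any regular set of configurations on which its energy is a symbol count, so the reduction to $\{w \sun b(|w|_\sigma)\newmoon\}$ is not available as stated, and the proposal never closes this quantifier gap.

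The paper avoids the issue by applying Ogden's lemma (\cref{lem:ogden}) directly to $L_H^{\mathrm{b}}$ for an arbitrary 1D Hamiltonian: since $H$ is not effectively 0D its image is unbounded, so one can pick $z \in L_H^{\mathrm{b}}$ whose binary energy has at least $p$ digits, and mark the $p$ most significant digits. Any decomposition whose pumped part meets the marked digits multiplies the claimed energy by at least $2^{\,j-1}$ under pumping, while the pumped input can change the true energy only linearly in $j$ (at most $(|E_{\max}|+|E_{\min}|)(2k+(j-1)m)$ after adding $(j-1)m$ spins and interactions, or else a single address is pumped past length $k$ and the string is malformed). The exponential-versus-linear mismatch is exactly the intuition you articulate, but Ogden's marking is what lets it be run against \emph{every} decomposition for \emph{every} such $H$, which a plain pumping argument on a hand-picked model cannot do. Your intersection idea is repairable: by the characterization of effectively 0D Hamiltonians (\cref{pro:eff0D}), any genuine 1D Hamiltonian admits a block $x$ with $H(x^l)$ affine and unbounded in $l$, the set $\{e(x)^l \sun \{+,-\}\{0,1\}^*\newmoon\}$ is regular, and $\{e(x)^l \sun b(\alpha l+\beta)\newmoon\}$ is not context-free; but this extra step is exactly what is missing from the proposal as written.
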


We prove this proposition in \cref{ssec:proofofbinary}, and elaborate on further perspectives to characterise the freedom of $L_H$ in \cref{sec:conclusion}.

\section{The role of time \label{sec:time}}

So far, time has played no role in the construction of $L_H$ or in the classification. 
In order to investigate the role of time, we now cast the time evolution of a spin Hamiltonian as a formal language, and classify it in the Chomsky hierarchy. To this end, let us first introduce a general notion of a time evolution operator.

\begin{definition}[Time evolution operator]\label{def:timeevolution}
A time evolution operator $U$ is a function 
\begin{equation}
	\begin{split}
		U: \mathcal{D} \times T &\to \mathcal{D}\\
			(x,t) &\mapsto U(x,t)
	\end{split}
\end{equation}
where $T$ is an additive subgroup of $\R$.
In addition $U$ is a one parameter group, that is, for all $s,t \in T$
\begin{align}
	U(U(x,s),t) &= U(x,s+t) \\
	U(x,0) &= x
\end{align}
\end{definition}

Analogously to $L_H$, we now define the language $L_U$ which contains the set of possible transformations. 
More precisely, $L_U$ consists of the input and output configurations of $U$ where time is `integrated out', so that 
its elements are pairs of earlier--later. 
\begin{definition}[Language of the time evolution]\label{def:langtime}
	Let $H$ be a spin Hamiltonian (\cref{def:spinhamiltonian}) with domain $\mathcal{D}$, 
	and $U$ be a time evolution operator (\cref{def:timeevolution}). 
	The \emph{language of the time evolution $U$}  is defined as  
	\begin{equation}
	\label{eq:LU}
	L_U= \{e(x)  \sun  e(U(x,t)) \newmoon \mid x \in \mathcal{D}, t\in T_{\geq 0} \}
	\end{equation}
	where $e$ is the encoding given by \cref{eq:e}, and $T_{\geq 0}$ is the set of all nonnegative elements of $T$. 
\end{definition}

Note that  only the domain of the Hamiltonian appears in \eqref{eq:LU}, since the relation between $U$ and $H$ is so far unspecified. 
If $H$ were quantum mechanical, the time evolution operator would be given by $U_H = e^{-i H t}$; 
if $H$ were a Hamiltonian in classical mechanics, 
the time evolution would be given by Hamilton's equations
\begin{equation} 
\mathrm{d}q/\mathrm{d}t = \partial H / \partial p , \quad 
\mathrm{d}p/\mathrm{d}t  = - \partial H/ \partial q 
\end{equation} 
where $q$ and $p$ are the generalised coordinates and momenta, respectively \footnote{Obviously, this $q$ and $p$ are completely unrelated to the previous ones (denoting the number of spin values and number of local Hamiltonians, respectively).}.
But classical spin Hamiltonians are neither of these: 
they are neither quantum operators, 
nor classical Hamiltonians, because for the latter we would need to define a canonical pair of coordinates $q,p$ (and spin values can only provide one of those; 
the local interactions should not change under the time evolution, as they are additional parameters specifying the local structure). 
It follows that, if we only have a spin Hamiltonian $H$, the only choice for a time evolution is the trivial one $U = \mathrm{Id}_{\mathcal{D}}$, which gives rise to the language  
\begin{equation}\label{eq:langtime}
	L_U = \{ e(x) \sun e(x) \newmoon \mid x \in \mathcal{D} \}
\end{equation}

\begin{proposition}[Classification in the Chomsky hierarchy of $L_U$]\label{pro:langtime}
	Let $H$ be a spin Hamiltonian (\cref{def:spinhamiltonian}) 
	and $L_U$ be the language of the trivial time evolution given by \eqref{eq:langtime}.  
	\begin{enumerate}[before=\leavevmode, label=(\roman*),ref=(\roman*)]
	\item \label{pro:langtime:0D}
	If $H$ is a 0D spin Hamiltonian then $L_U$ is regular. 
	\item \label{pro:langtime:special}
	If $H$ is an effectively 0D spin Hamiltonian or a 1D spin Hamiltonian, in both cases with 
	  one spin symbol ($q=1$) and fixed interactions ($I_1= \ldots =I_n$ for all $n$), 
	 then $L_U$ is deterministic context-free and not regular. 
	\item \label{pro:langtime:other}
	If $H$ is an effectively 0D spin Hamiltonian or a 1D spin Hamiltonian with more than one spin symbol or unfixed interactions, or if $H$ is a $d$D spin Hamiltonian with $d\geq 2$ or an all-to-all spin Hamiltonian, then 
	$L_U$ is context-sensitive and not context-free. 
		\end{enumerate}
\end{proposition}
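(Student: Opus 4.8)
The plan is to treat $L_U=\{e(x)\sun e(x)\newmoon\mid x\in\mathcal D\}$ as a \emph{copy language}: each word consists of two identical halves $e(x)$ separated by $\sun$ and terminated by $\newmoon$, so its complexity is governed entirely by how rich the set of admissible halves $\{e(x)\mid x\in\mathcal D\}$ is. Case \ref{pro:langtime:0D} is then immediate: for a 0D spin Hamiltonian (\cref{def:0D}) the domain $\mathcal D$ is finite, so $L_U$ is a finite language and therefore regular.

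For case \ref{pro:langtime:special} I would first observe that if $q=1$ and the interactions are fixed ($I_1=\cdots=I_n$), then for each $n$ there is exactly one admissible configuration, whose encoding is $e(x_n)=a^n$ for the fixed one-site block $a=s\,\gamma(I)$ (the unique spin symbol followed by the encoding of the fixed local interaction). Since effectively 0D (\cref{def:eff0D}) and 1D (\cref{def:1D}) spin Hamiltonians are defined for all $n\in\N$, this gives $L_U=\{a^n\sun a^n\newmoon\mid n\geq1\}$ up to relabelling. A deterministic pushdown automaton recognises this by pushing one marker per block while reading the first half, switching mode deterministically at $\sun$, popping one marker per block while reading the second half, and accepting at $\newmoon$ if and only if the stack is empty; hence $L_U$ is deterministic context-free. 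It is not regular by the pumping lemma, since pumping inside the first $a^p$ of $a^p\sun a^p\newmoon$ unbalances the two halves.

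For case \ref{pro:langtime:other} I would establish the two bounds separately. For the upper bound, a linear bounded automaton marks the separator $\sun$, checks in linear space that the segment before $\sun$ equals the segment after it by a standard two-pointer comparison, and checks that this segment is a valid encoding $e(x)$ of some $x\in\mathcal D$; the latter domain-membership test is at most context-sensitive, because by \cref{thm:main} even the richer language $L_H$, which additionally verifies energies, is context-sensitive for all these classes, so $L_U$ is context-sensitive. For the lower bound I would exploit the genuine freedom present in exactly these cases: when $q\geq2$, or when the interaction structure offers $\geq2$ choices at a site, or in the $d$D (\cref{def:dD}) and all-to-all (\cref{def:all-to-all}) settings, the set $\{e(x)\}$ contains arbitrarily long strings built from at least two distinguishable blocks, so $L_U$ contains a coding of the copy language $\{WcW\}$. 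Applying the context-free pumping lemma, given a pumping length $p$ I pick $z=e(x)\sun e(x)\newmoon$ with $e(x)$ of a structured block-pattern of length $>p$; since any decomposition $z=uvwxy$ satisfies $|vwx|\leq p$, the pumped window cannot reach across both copies, so pumping either alters one copy but not the other, or destroys the unique occurrence of $\sun$ or $\newmoon$, in every case leaving $L_U$. This contradiction shows $L_U$ is not context-free.

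The hard part will be the lower bound for the $d$D and all-to-all domains, where the encodings are constrained (the lattice must lie in $\mathcal R$, and the interaction lists must match the prescribed combinatorial structure), so I must verify that configurations realising the required two-block pattern, of unbounded length, actually exist in $\mathcal D$. I would handle this by restricting to a growing family of admissible lattices (respectively system sizes) and varying only the spin values, which preserves validity while supplying both the length and the distinguishable blocks the pumping argument needs, and by choosing the pattern so that the length-$p$ window argument survives the internal structure of each block. I would also dispose of the remaining subcase of \ref{pro:langtime:other} with $q=1$ but unfixed interactions: there the blocks $\gamma(I_j)$ differ between the $\geq2$ allowed interactions, again furnishing two distinguishable blocks, so the identical copy-language argument applies.
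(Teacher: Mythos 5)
Your proofs of cases (i) and (ii), and your upper bound for case (iii), are essentially the paper's: a finite language for 0D; a DPDA that pushes while reading the first copy and pops while reading the second, plus the regular pumping lemma, for case (ii); and for case (iii) an LBA that validates well-formedness, validates domain membership by reusing the machinery behind \cref{thm:main}, and compares the two copies symbol by symbol. (One caveat on the upper bound: ``the domain test is at most context-sensitive because $L_H$ is context-sensitive'' is not by itself a valid inference---context-sensitive languages are not closed under the erasing operation that strips the energy suffix---but the intended justification, namely running the domain-validation subroutines of the LBAs from \cref{thm:main} on their own, is exactly what the paper does.)

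The genuine gap is in the non-context-freeness argument of case (iii). You assert that because $|vwx|\leq p$, ``the pumped window cannot reach across both copies, so pumping either alters one copy but not the other, or destroys $\sun$ or $\newmoon$.'' This dismisses precisely the critical case: the window \emph{can} straddle $\sun$, with $v$ a suffix of the first copy and $x$ a prefix of the second, and this is the only decomposition for which the pumped string can keep its two halves of equal length---so it is the case on which the whole proof hinges, and there pumping alters \emph{both} copies. It cannot be waved away, because with a single block type the straddling pump does preserve membership; that is exactly why $\{a^n \sun a^n \newmoon \mid n \in \N\}$ in your case (ii) is context-free. The paper resolves it by choosing $s$ with the first $|s|/2$ spins of one value and the last $|s|/2$ of another, so that in any length-preserving straddle $v$ consists only of blocks of the second kind and $x$ only of blocks of the first kind; pumping then leaves the first half with an excess of second-kind blocks and the second half with an excess of first-kind blocks, hence unequal. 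Relatedly, your mechanisms for producing two distinguishable blocks (vary spin values when $q\geq2$; vary interaction choices when interactions are unfixed) leave uncovered the subcase of a $d$D or all-to-all spin Hamiltonian with $q=1$ and fixed interactions, which still falls under case (iii): there you can vary neither spins nor interactions, and your stated plan of ``varying only the spin values'' is unavailable. The paper covers all-to-all with $q=1$ by noting that the number of local interactions per spin varies along the string, so the two halves of $e(s)$ differ automatically, and covers $d$D with $q=1$ by falling back on the lattice-cardinality pumping argument from the proof of \cref{thm:main}~\ref{thm:dD}, in which pumping yields spin numbers that miss the sparse set of admissible lattice sizes. Your proposal needs one of these (or an equivalent) to be complete.
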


We prove this proposition in \cref{ssec:proofoflangtime}. 
This result can be intuitively understood as follows. 
In the 0D case (statement \ref{pro:langtime:0D}), the language is finite and hence regular. 
In the general case (statement \ref{pro:langtime:other}),  
the language is of the form $\{ww \mid w \in \Sigma^* \}$ where $w$ are words over some alphabet $\Sigma$, 
which is context-sensitive.  
In case \ref{pro:langtime:special}, the language is of the form $\{a^n \sun a^n \mid n\in \N\}$, where $a$ is a fixed word, which can be recognised by a DPDA.

We thus find that $L_U$ is generally more complex than $L_H$. In words, this owes to the fact that the automaton recognising $L_U$ needs to compare elements of the domain, which is more difficult than comparing an element of the domain with an energy in unary, as in $L_H$.

\cref{pro:langtime}  captures the complexity of transitions of states for a spin Hamiltonian with a trivial time evolution, and opens the door to a comparative study of the complexity of the language $L_U$ [Eq.\ \eqref{eq:LU}] for non-trivial time evolutions.

\section{Conclusions and Outlook \label{sec:conclusion}}

In this work, we have established a new relation between spin physics and theoretical computer science 
 by casting  classical spin Hamiltonians as formal languages.
Specifically, we have provided a general  definition of spin Hamiltonians (\cref{def:spinhamiltonian}) 
and cast them as  languages (\cref{def:language}). 
We have leveraged this relation  to classify the language of a spin Hamiltonian in the Chomsky hierarchy (\cref{thm:main}), and thereby provided  a new complexity measure of spin Hamiltonians. 
We have compared our complexity measure to the computational complexity of the ground state energy problem, 
and found a different classification (\cref{pro:computcompl}). 
This leads, in particular,  to different easy-to-hard thresholds for the Ising model (\cref{tab:Ising}). 
We have also investigated the freedom of casting $H$ as a language $L_H$ by encoding the energy in binary, and have shown that  this can increase the complexity  (\cref{pro:binary}). 
We have also defined the language of the time evolution of a spin Hamiltonian $L_U$, and classified it in the Chomsky hierarchy for a trivial time evolution (\cref{pro:langtime}).

This work sets the stage for many further investigations. 
One has to do with characterising the freedom of the map from $H$ to $L_H$ more thoroughly, 
as a complexity measure would choose the least complex $L_H$. 
While we have partially characterised this freedom by considering the binary encoding of the energy, 
a full characterisation of this freedom and a minimisation of the resulting set would strengthen the complexity measure. 
While the full characterisation may be within reach, 
the minimisation may be undecidable due to Rice's Theorem \cite{Ko97}. 
The situation is reminiscent  of algorithmic information theory: 
just as the Kolmogorov complexity mildly depends on the choice of universal Turing machine (see e.g.\ \cite{Hu10b}), 
the complexity of $L_H$ as a language depends on the map from $H$ to $L_H$;  
and  just as the Kolmogorov complexity is uncomputable, determining the least complex $L_H$ may be uncomputable too.

A further question concerns the scope of $H$ itself, i.e.\ what \emph{is} a spin Hamiltonian? 
A crucial property  for our relation is the discreteness of the spins---continuous variables (as considered in \cite{De16b}) ought to be discretised in order to connect them to formal languages. 
Another property is the energy: instead of the integers we could have chosen the algebraic reals, as they allow for a finite description. 
However, it is the domain of $H$ the part that is least obvious to define, as it relies on a total numbering of the spins as well as the notion of   local interaction. 
This can be circumvented by casting spin Hamiltonians as languages \emph{directly at the intensive level}, that is, casting the local Hamiltonian $h$ as the grammar $G_h$ of the language---for example, the local Hamiltonian of the 1D and 2D Ising model can be cast as a context-free and context-sensitive grammar, respectively \cite{Re21c}. This perspective results in a less cumbersome definition of the domain of $H$, but a less thorough classification than that of \cref{thm:main}. 
Alternatively, spin Hamiltonians could be defined on families of graphs, and  they could be characterised with graph grammars \cite{Ro97}.  


Another  investigation involves  identifying the boundaries of the Chomsky hierarchy in the features of $H$ and $L_H$. 
For example, what is the minimal addition to a 1D spin Hamiltonian so that its $L_H$  is non-deterministic context-free or context-sensitive? 
Preliminary results suggest that 1D lattices with a few long-range connections trigger this jump in complexity---these graphs could be seen as small-world graphs \cite{Wa98}, suggesting a  link to complex networks. 
This boundary can be fully characterised for the Ising model \cite{Re21c}. 
On the other hand, 
grammatical sentences of a natural language form a mildly context-sensitive language \cite{Fr11b}, 
suggesting that when these sentences are seen as spin models, these are more connected than 1D but less than 2D lattices.

Another question is: What is the relation between our complexity measure and the existence of phase transitions in a spin model? This relation is not transparent for the computational complexity of the ground state energy problem (GSE), as GSE of both the 1D and 2D Ising model without fields is in \textsf{P}, 
but the second one does have a  phase transition whereas the first one does not. Instead, the complexity of GSE is related to the existence of an algorithm to solve it. 
It is intriguing that the complexity of $L_H$ as a language for the Ising model does correlate with the existence of a phase transition (\cref{tab:Ising}). 

This raises the question of symmetries, 
which seem to  play a different role in our complexity measure than in statistical mechanics. 
On the one hand,  the definition of 1D and 2D spin Hamiltonians crucially relies on the ``external" symmetry of the lattice, which are the decisive factor for their complexity. 
This  external symmetry is a property of the domain $\mathcal{D}$ and the corresponding energy---for example, it says that if an element of the domain and the  energy are scaled in a certain way, the result will also be an element of the language. 
This transformation is captured by a \emph{grammar}---and we find that all languages $L_H$ in this work have a grammar (regular, deterministic context-free or context-sensitive). 
On the other hand, ``internal" symmetries (i.e.\ symmetries of the set of local interactions) do not seem to play  a role,  because the local Hamiltonian can be given as  part of the input, with the  exception of the class of effectively 0D spin Hamiltonians.

One question which  motivated this  work was the wish to rigorously compare universal spin models \cite{De16b,Ko20} and universal Turing machines \cite{De20d}. 
One route to comparing these two notions of universality is by putting the two objects of study---spin models and languages---at the same level, particularly at the level of languages. This  is what we have achieved in casting $H$ as $L_H$. 
The classification of $L_H$ in the Chomsky hierarchy suggests that universal spin models are \emph{weaker} than universal Turing machines, since  the 2D Ising model with fields is a universal spin model \cite{De16b} but its language is context-sensitive (\cref{thm:main}), and thus can be recognised by a linear bounded automaton, which is weaker than a Turing machine (\cref{fig:chomsky_hierarchy}).  
Yet, the 2D Ising model without fields is not universal, but its language is also context-sensitive in our classification. 
This indicates that we also need to translate the relevant transformation for spin models (called \emph{simulation} \cite{De16b}) to  a relation among the corresponding languages. 
This relation will be weaker than that provided by computable reductions \cite{Ko97}. 
How will the universality of spin models manifest itself in the world of languages? 
And, conversely, does the universality of Turing machines have no implications for the universality (as defined in \cite{De16b}) of spin models?

A different route to addressing these questions is to devise a framework for universality that expresses its relevant features abstractly, 
and which contains universal spin models and universal Turing machines (as well as many others) as instances. 
That framework would allow us to distinguish types of universality. 
This is what we are attempting to do in our categorical framework for universality \cite{St22a}. 

Understanding the relation between these universalities will also allow to explore the scope of undecidability in spin models,  and by extension in physics and complex systems. 
Undecidability plays a central role in computer science---for example, every non-trivial property  of a recursively enumerable language is undecidable by Rice's Theorem \cite{Ko97}. 
The relation between $H$ and $L_H$ is one step toward making a similar statement for spin models (see \cite{Gu08} for a different approach).   
Finally, quantum spin models could be related to notions of quantum computation, and   their notions of universality could be compared \cite{Cu17,Pi20}.



\noindent \textbf{Acknowledgements:}  
We thank our friends and colleagues Bernat Corominas-Murtra, Ricard Sol\'e, Tobias Reinhart and Tim Netzer for discussions, M$\bar{\mathrm{a}}$ris Ozols for  comments regarding the importance of the rescaling of $h$ for the classification of the language, and Thomas Tappeiner for help in the extension of the results to non-rectangular lattices.

\noindent \textbf{Funding:} 
This work was supported by the START Prize of the Austrian Science Fund (FWF) (project Y 1261-N).

\noindent \textbf{Author contributions:} 
DD and GDLC proved a preliminary version of the results of this paper. SS unified and generalised the results, added new results, and wrote the current manuscript together with GDLC.


\printbibliography[title={References}]


\newpage 
\appendix

\begin{center}
\section*{ Supplementary Material}
\end{center}

\bigskip

\section{Effectively 0D spin Hamiltonians 
\label{ssec:eff0D}}

Here we characterise effectively 0D spin Hamiltonians, which are  defined as spin Hamiltonians with a bounded image (\cref{def:eff0D}). 
For any $x \in \mathcal{D}$, we define its ``cardinality" as $|x| = n$ (instead of $2n$), and this denotes the number of spins.

\begin{proposition}[Characterisation of effectively 0D spin Hamiltonians]\label{pro:eff0D}
	Let $H$ be a 1D spin Hamiltonian. The following are equivalent: 
	\begin{enumerate}
	\item $H$ is an effectively 0D spin Hamiltonian (\cref{def:eff0D}). 
	\item $H$   only depends on the first and last $ m \leq k$ spins and local interactions. 
	That is, there is an $ m \leq k$ such that for every configuration $ (s_1,I_1, \ldots, s_n, I_n)$ with $n > 2  m$
	\begin{equation}
	\begin{split}
		H(s_1,I_1, \ldots, s_n, I_n) =  \\
		H(s_1,I_1, \ldots, s_{  m}, I_{ m}, s_{n-  m}, I_{n- m}, \ldots, s_n, I_n)  .
		\end{split}
	\end{equation}
	\end{enumerate}
\end{proposition}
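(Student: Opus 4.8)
The plan is to prove the two implications separately; $(2)\Rightarrow(1)$ is immediate, and $(1)\Rightarrow(2)$ carries the content. Write $\mathcal{I}_0$ for the constant value of $\mathcal{I}_{n,j}$ (\cref{def:1D}), a fixed finite set. For $(2)\Rightarrow(1)$, if $H$ depends only on the first $m$ and last $m+1$ pairs $(s_j,I_j)$ whenever $n>2m$, then on such configurations $H$ is a function of finitely many pairs, each ranging over the finite set $\Sigma_q\times\mathcal{I}_0$, so it takes finitely many values; together with the finitely many values on the finite set of configurations with $n\le 2m$, the image is finite, hence bounded, so $H$ is effectively 0D.

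For $(1)\Rightarrow(2)$ I would use a transfer-graph and pumping argument. Because the interaction structure is position-independent and of range $k$, the energy is computed by scanning the chain and adding at each step a local contribution determined by a window of at most $2k+1$ consecutive pairs. I encode this as a finite weighted directed graph $G$ whose vertices are windows of $2k$ consecutive pairs over $\Sigma_q\times\mathcal{I}_0$, whose edges join overlapping windows, and whose weight on an edge is the newly completed bulk term $h_\alpha(\cdots)$. Every configuration corresponds to a walk in $G$ whose total weight is its bulk energy, while the finitely many boundary terms---those at positions $j\le k$ or $j>n-k$, the only ones in which $\missingspinsymbol$ replaces an out-of-range spin---are uniformly bounded, so $(1)$ bounds the bulk energy over all walks.

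The key observations are that $G$ is strongly connected and that bounded bulk energy forces every cycle of $G$ to have weight zero. Strong connectivity holds because any sequence over $\Sigma_q\times\mathcal{I}_0$ is admissible ($G$ is essentially a de Bruijn graph), so from any window one reaches any other. A cycle is then a repeatable bulk block: concatenating a left boundary, $t$ traversals of the cycle, and a right boundary yields valid configurations of energy $tw+c$, where $w$ is the cycle weight; if $w\ne0$ this grows without bound, contradicting $(1)$. Since $G$ is strongly connected and all its cycles have weight zero, there is a single global vertex potential $\phi$ with every edge weight equal to $\phi(\text{head})-\phi(\text{tail})$; hence the bulk energy telescopes exactly to $\phi(\text{final window})-\phi(\text{initial window})$, depending only on the windows at the two ends of the chain.

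It then remains to combine this with the bounded boundary terms to conclude that $H$ depends only on a bounded number of pairs at each end, and to sharpen the constant to $m\le k$. I expect this sharpening to be the main obstacle: the naive decomposition $H=\phi(\text{final})-\phi(\text{initial})+(\text{boundary terms})$ only exhibits dependence on the outermost $O(k)$ pairs, since both $\phi$ and the boundary terms read windows of width up to $2k$. Obtaining the tight bound $m\le k$ requires a finer accounting of the cancellation near each edge---carefully separating the genuine boundary terms (those containing $\missingspinsymbol$, necessarily within $k$ of an end) from the bulk terms that telescope away, and choosing the potential so that the residual dependence collapses onto the first and last $m\le k$ pairs.
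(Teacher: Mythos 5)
Your split into the two implications matches the paper's, and your $(2)\Rightarrow(1)$ is the paper's argument verbatim (finitely many short configurations, hence a finite image). For $(1)\Rightarrow(2)$ you take a genuinely different route. The paper first shows, by repeating a witness configuration $l$ times, that the choice of local interactions cannot affect the energy, and then proves the contrapositive: if property 2 fails at $m=k$, it picks witnesses $x=x_1x_2$ and $x'=x_1x_3x_2$ with $H(x)-H(x')=\Delta\neq 0$ and claims $H(x^l)-H(x'^l)=l\Delta$, contradicting boundedness. Your transfer-graph argument---de Bruijn vertices of width $2k$, cycle weights forced to zero by pumping, a global potential $\phi$ from strong connectivity, telescoping of the bulk sum---is sound: strong connectivity does hold because $\mathcal{I}_{n,j}$ is a constant set, so every sequence of pairs is admissible, and the potential is well defined since all closed walks have zero weight. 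It yields $H=F(\text{first }2k\text{ pairs})+G(\text{last }2k\text{ pairs})$, i.e.\ property 2 with $m\leq 2k$, and it absorbs the interaction labels into the graph so that the paper's separate fixed-interaction step is never needed.

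The obstacle you flag at the end---sharpening $m\leq 2k$ to $m\leq k$---is a real gap relative to the statement as written, but it is not a defect of your method: it cannot be closed, because the statement with $m\leq k$ is false. Take $k=2$, $q=2$, $p=1$, the single local interaction $(1,(-2,+2))$, and $h(a,b)=\psi(b)-\psi(a)$ for $a,b\in\Sigma_q$ with $\psi(0)=0$, $\psi(1)=1$, while $h=0$ whenever either argument is $\missingspinsymbol$. Then for $n\geq 8$,
\begin{equation*}
H(s_1,\ldots,s_n)=\sum_{i=n-3}^{n}\psi(s_i)-\sum_{i=1}^{4}\psi(s_i),
\end{equation*}
so $H$ is bounded (hence effectively 0D by \cref{def:eff0D}) yet depends on $s_3$ and $s_4$; moreover every truncated chain of length at most $4$ has all its terms containing a $\missingspinsymbol$ and thus energy exactly $0$, so property 2 fails for every $m\leq k=2$ while it holds at $m=2k=4$. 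The same example locates the flaw in the paper's own proof: for these witnesses $H(x^l)-H(x'^l)=\Delta$ for \emph{every} $l$, because the discrepancy is a pure boundary effect that does not amplify under repetition, so the claimed identity $H(x^l)-H(x'^l)=l\Delta$ is incorrect. In short, your argument (once the routine bookkeeping is written out) proves the correct form of the proposition, with the first and last $2k$ pairs in place of $k$; this weaker constant is all that is used downstream, e.g.\ in the DFA construction for \cref{thm:main}, which only needs to store a bounded number of boundary spins. The fix should go into the statement, not into your proof.
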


In words, the energy of an effectively 0D spin Hamiltonian depends solely on $k$ spins at each end of the chain (\cref{fig:2d_enumeration}). 
They thus behave as `holographic' 1D spin models, as the interactions in the middle of the spin chain cancel out, and only the behaviour at the boundary matters. This results, in particular, in a bounded image of $H$. 

An example of an effectively 0D spin Hamiltonian is given by $\Sigma_q = \{0, 1\}$, the local interactions $\mathcal{I}_{n,j} = \{(1,(0,1))\}$,  
and  the local Hamiltonian   
$h_1(s_1, s_2)= 1 $ or $-1$ if $s_1s_2=01$ or 10, respectively, and 0 otherwise.  
The key property is that only non-equal adjacent spins  contribute to the energy, but there cannot be two $01$ spin pairs without a $10$ between them and vice versa. It follows that the energy is bounded (it is $-1,0$ or $1$), and can be determined by just checking the first and last spin.

\begin{proof}[Proof of \cref{pro:eff0D}]
	\emph{2.} $\Rightarrow$ \emph{1.} holds because there are only a finite number of configurations for a finite number of spins. The size of the image is bounded by the number of configurations and hence finite.
	
	\emph{1.} $\Rightarrow$ \emph{2.} 
	Let $H$ be a 1D spin Hamiltonian and let $x = (s_1, I_1, \ldots, s_n, I_n)$ be a configuration with $n > 2k$ spins.
	As the maximal interaction range is $k$ (by \cref{def:0D}), 
	there is at least one local interaction in $x$ which depends only on spins within $x$, i.e.\ which does not address spins outside $x$. Choose one such local interaction and denote its position by $\sigma$.

	We start by showing that the local interactions are fixed, i.e.\ that the energy is independent of the choice of local interactions. This is done by contradiction: Assume that there are at least two local interactions $I$ and $I'$ which lead to different energies for at least one spin configuration. 
	Choose $x$ such that the local interaction at position $\sigma$ is $I$ and define $x'$ as the same configuration as $x$ but where $I$ is replaced with $I'$ at position $\sigma$.
	As the two local interactions are different there is at least one such pair $x, x'$ with different energies, 
		$H(x) = H(x') + \Delta $, 
	where $|\Delta| > 0$. 
	As the energy of the local interaction at position $\sigma$ only depends on spin configurations defined within $x$ ($x'$), taking $l$ copies of $x$ and $x'$ will yield an energy difference of $l  \Delta$, i.e. 
$		H(x^l) = H(x'^l) + l \Delta$. 
	In the limit of $l \to \infty$ the total energy difference $l\Delta$ diverges, which contradicts the assumption that the image of $H$ is bounded. 

	We now show the contrapositive of 	\emph{1.} $\Rightarrow$ \emph{2.}
	So assume $H$ does not have property \emph{2.}, i.e.\  for every $m\leq k$ there exist configurations $x = x_1x_2, |x_1|=|x_2|=   m$ and a longer one $x'=x_1x_3x_2, |x_3|>0$  such that
	$ H(x)-H(x')=\Delta$ with $\Delta\neq 0$. 
	We want to show that the image of $H$ is unbounded. 
	First, choose $  m = k$ and find two such configurations $x$ and $x'$. 
	Consider now the $l$-fold repetition of the configurations, $x^l, x'^l, l\geq 1$. 
	As $x$ and $x'$ coincide on the first and last $m = k$ spins, the energy of the additional local interactions in $x_3$ will only depend on (i.e. address) spins in $x'$. Hence, taking $l$ times $x$ and $x'$ will also multiply the energy difference by $l$:
		$H(x^l)-H(x'^l)= l\Delta$. 
	As above, taking the limit $l \to \infty$ leads to a diverging energy difference,  and hence an unbounded image of $H$.
\end{proof}	

In the previous proof we have shown a useful fact: 

\begin{corollary}\label{cor:eff0D}
If $H$ is an effectively 0D spin Hamiltonian, then it has  fixed local interactions.  
	That is, for all $n,j$ and all spin configurations, all configurations of local interactions lead to the same energy.
\end{corollary}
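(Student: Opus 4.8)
The plan is to obtain \cref{cor:eff0D} as a direct consequence of the amplification argument already needed to prove \cref{pro:eff0D}, extracted into a standalone statement. Recall that an effectively 0D spin Hamiltonian is by definition a 1D spin Hamiltonian, so its interaction structure $\mathcal{I}_{n,j}$ is constant in both $n$ and $j$, together with the hypothesis that its image is bounded. I would argue by contradiction: assume the local interactions are \emph{not} fixed, meaning there exist two admissible local interactions $I,I' \in \mathcal{I}$ and a surrounding spin configuration on which the two associated local terms $h$ evaluate to different values.

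First I would isolate a single site at which the two interactions can be compared cleanly. Because the interaction range is bounded by $k$, for any configuration with $n>2k$ spins there is at least one position $\sigma$ whose address list $A_\sigma$ reaches only spins lying strictly inside the configuration, so the boundary never forces a $\missingspinsymbol$. I would build a base configuration $x$ that places $I$ at position $\sigma$, and an otherwise identical $x'$ that places $I'$ there, chosen so that the two local terms differ, giving $H(x)=H(x')+\Delta$ with $\Delta\neq 0$; since the term at $\sigma$ sees only spins within $x$, the entire difference comes from that one site.

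The crux is then an amplification-by-concatenation step. I would form the $l$-fold concatenations $x^l$ and $x'^l$, which are legitimate domain elements precisely because $\mathcal{I}$ is constant and the same interactions may be repeated at congruent positions. Taking the block length larger than $k$ guarantees that the self-contained interaction at $\sigma$ inside each copy never reaches across a copy boundary, so each copy contributes the same difference $\Delta$ independently of the others while everything else cancels, yielding $H(x^l)=H(x'^l)+l\Delta$. Letting $l\to\infty$ makes $|l\Delta|$ diverge, contradicting the boundedness of the image. Hence no such pair $I,I'$ exists: the energy is independent of the choice of local interactions, which is exactly the claim that the local interactions are fixed for all $n,j$.

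The step I expect to be the main obstacle is the bookkeeping that makes the per-copy contribution exactly additive. I must verify that, after concatenation, the only terms differing between $x^l$ and $x'^l$ are the $l$ copies of the site-$\sigma$ term, and that each such term evaluates on the same local spin pattern as in the single-block case. This hinges on choosing the block so that the range-$k$ interaction at $\sigma$ stays interior, together with the constancy of $\mathcal{I}$; once this locality and additivity are pinned down, the divergence of $l\Delta$ is immediate.
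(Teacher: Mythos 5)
Your proposal is correct and is essentially identical to the paper's own argument: the paper proves this corollary as the first part of the proof of \cref{pro:eff0D} (contradiction, choice of an interior position $\sigma$ whose addresses stay within the configuration, $l$-fold concatenation giving $H(x^l)=H(x'^l)+l\Delta$, and divergence of $l\Delta$ contradicting the bounded image), and then extracts it as \cref{cor:eff0D}. Your extra care about the additivity bookkeeping---that the copies differ only in the interaction at the images of $\sigma$, each evaluating on the same interior spin pattern---is exactly the point the paper relies on when it notes that the interaction at $\sigma$ only depends on spins within $x$.
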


Are there \emph{effectively 1D spin Hamiltonians} as a subset of 2D spin Hamiltonians, or more generally, 
effectively $(d-1)$D spin Hamiltonians as a subset of $d$D spin Hamiltonians? 
Namely, does this holographic property appear in higher dimensions, too? 
Since we are ultimately interested in the complexity of the corresponding language, 
and the languages of $2D$ and higher D spin Hamiltonians are in the same class (\cref{thm:main}), 
the only relevant case is that of effectively 1D spin Hamiltonians. 
Recognising the domain of a 2D spin Hamiltonian, however, already requires a  linear bounded automaton, regardless of the choice of local interactions. That is, an effectively 1D spin Hamiltonian needs to be a special case of a 2D spin Hamiltonian, and the language of the latter is context-sensitive for all cases.

\section{Proof of \cref{thm:main} (Classification in the Chomsky hierarchy) 
\label{ssec:proofofmain}}

The proof of \cref{thm:main} is structured following the items of the result:
we prove \ref{thm:0D} in \cref{sssec:proofofmain(i)},
\ref{thm:1D} in \cref{sssec:proofofmain(ii)},
\ref{thm:dD} in \cref{sssec:proofofmain(iii)},
\ref{thm:all} in \cref{sssec:proofofmain(iv)}, 
and the result with other boundary conditions in 
\cref{sssec:proofofmainother}.

\subsection{Proof of \cref{thm:main} \ref{thm:0D} \label{sssec:proofofmain(i)}}

The language of a 0D spin Hamiltonian is regular because $H$ has a finite domain (any finite set of strings defines a regular language \cite{Ko97}). 

To prove the statement for effectively 0D spin Hamiltonians, we provide the following deterministic finite-state automaton (DFA)  recognising the language. 
By \cref{pro:eff0D}, the values of the first and last $k$ spins suffice to calculate the energy, and the choices of local interactions will not affect the energy (\cref{cor:eff0D}), so 
the Hamiltonian can be written as a function of $2k$ spins,  $H_{2k}(s_1,\ldots,s_k, s_{n-k}, \ldots, s_n)$.
The automaton recognising $L_H$ needs to check the energy using the values of the first and last $k$ spins,  
and that the local interactions are elements of the set of allowed local interactions $\mathcal{I}_{0D}$ --- which is independent of both $n$ and $j$. 

To define the states of the DFA, 
let $E_{\min} = \min\left(H(\mathcal{D}) \cup \{0\}\right)$ where $H(\mathcal{D}) $ 
is the image of $H$,  and similarly for $E_{\max}$. 
The states of DFA are given as tuples 
\begin{equation}
\begin{split}
	\Large{\{}(F, L, E, i) \mid &
	\  F,L \in \cup_{j=0}^k \Sigma_{q \missingspinsymbol}^j,  \\
	& E \in \{E_{\min}, \ldots, E_{\max}\} ,\ i \in \Sigma_I  \Large{\}}
	\end{split}
\end{equation}
together with the states
\begin{equation}
	\{\textrm{Accept}, \textrm{Reject}\}
\end{equation}
where $F, L$ contain the first and last $k$ spins, respectively, 
$E$ is used to store the energy, and $i$  is used to validate the local interactions, where we have defined
\begin{equation}\label{eq:SigmaI}
\Sigma_I \coloneqq \bigcup_{j=0}^{M} ([p] \cup \{\lemo,\rimo,\addressdelimiter ,\square, \diamond\})^j
\end{equation} 
with 
\begin{equation} 
 M \coloneqq \max_{I \in \mathcal{I}_{0D}} |\gamma (I)|.
 \end{equation}
The initial state of the automaton is $(F=\epsilon, L=\epsilon, E=0, i=\epsilon)$ and it starts at the first spin of the input, $s_1$.
Denote the symbol currently being read by $C$.
The transitions are then given by \cref{alg:DFA_part1,alg:DFA_part2},  
where the convention is used that the string is rejected if a transition is not possible, e.g.\ if the desired target state does not exist.
In words, \cref{alg:DFA_part1} validates the input until the symbol $\sun$, and saves the first and last $k$ spins to $F$ and $L$, respectively. 
\cref{alg:DFA_part2} calculates and validates the energy (on \cref{alg:DFA_calcE}); 
more precisely, for each pair $(F,L)$ there is a transition to some state $E \in \{E_{\min}, \ldots, E_{\max}\}$.
Finally, the energy given on the input is subtracted from $E$ (the energy saved in the head),  and the DFA accepts only if $E=0$ when reaching $\newmoon$.

\begin{figure}[htb]
\begin{algorithm}[H]
\caption{Transitions of the DFA, Part 1: Loading the spins into the memory}\label{alg:DFA_part1}
\begin{algorithmic}[1]
\State Assert that $C \in \Sigma_{q \missingspinsymbol}$
\While{$C \neq \sun$}
	\Comment{Append the spin to the state}
	\If{$|F|\leq k$} \Comment{First fill $F$}
		\State Append $F \gets FC$
	\ElsIf{$|L|\leq k$} \Comment{Then fill $L$}
		\State Append $L \gets LC$
	\Else \Comment{Keep the last $k$ spins in $L$}
		\State Remove the leftmost spin in $L$
		\State Append $L \gets LC$
	\EndIf
	\State Move to the next symbol
	\While{$C \notin \Sigma_{q \missingspinsymbol}$}\Comment{Read local interaction into $i$}
		\State Update $i \gets iC$ 
		\State Move to the next symbol
	\EndWhile
	\State Assert that $i \in \mathcal{I}_{0D}$
	\State Reset $i \gets \epsilon$
\EndWhile
\algstore{DFA}
\end{algorithmic}
\end{algorithm}
\end{figure}

\begin{figure}[htb]
\begin{algorithm}[H]
\caption{Transitions of the DFA, Part 2: Calculating and validating the energy \label{alg:DFA_part2}}

\begin{algorithmic}[1]
\algrestore{DFA}
\State Set $E \gets H_{2k}(FL)$ where $FL$ is $F$  concatenated with $L$ \label{alg:DFA_calcE}
\State Move to the next symbol
\If{$C = \diamond$} \Comment{Save the first symbol of $u(E)$}
	\State $i \gets \diamond$
\ElsIf{$C = \square $}
	\State $i \gets \square$
\ElsIf{$C = \newmoon$ and $E=0$}
	\State Accept
\Else
	\State Reject
\EndIf

\While{$C \neq \newmoon$}
	\State Assert $C = i$ \Comment{Check for a valid unary encoding}
	\If{$i = \diamond$}
		\State Set $E \gets E-1$ \Comment{Subtract the input from $E$}
	\Else
		\State Set $E \gets E+1$ \Comment{Add the input from $E$}
	\EndIf
	\State Move to the next symbol
\EndWhile
\If{$E=0$}
	\State Accept
\Else
	\State Reject
\EndIf
\end{algorithmic}
\end{algorithm}
\end{figure}

\subsection{Proof of \cref{thm:main} \ref{thm:1D} \label{sssec:proofofmain(ii)}}

\paragraph{The language of a 1D spin Hamiltonian is deterministic context-free}

We provide a deterministic pushdown automaton (DPDA) recognising the language. 
Similarly to the effectively 0D case the state of the machine is used to save the last read $2k+1$ spins and the last $k$ local interactions. Recall that in the 1D case $\mathcal{I}_{n,j}$ is independent of both $j$ and $n$, and define $\mathcal{I}_{1D} = \mathcal{I}_{n,j}$ for any $n,j$. 

To save partially read local interactions we use $\Sigma_I$ [Eq.\ \eqref{eq:SigmaI}], and  
let the states of the automaton be 
\begin{equation} \label{eq:headDPDA}
\begin{split}
	\left\{(L_s, L_I, i) \mid \ L_s \in \cup_{j=0}^{2k+1} \Sigma_{q  \missingspinsymbol}^j, \ L_I \in \cup_{j=0}^{k} \mathcal{I}_{1D}^{j}, \ i \in \Sigma_I \right\}\\ 
	\cup \{\textrm{Accept}, \textrm{Reject}\}. 
	\end{split}
\end{equation}
Note that there are only finitely many states,  as $\mathcal{I}_{1D}$ is a finite set. 
The stack symbols are $\{Z, \diamond, \square\}$ where $Z$ is the initial stack symbol, and the initial state is given by $(L_s=\epsilon, L_I= \emptyset, i=\emptyset )$.

The automaton starts at the first spin of the input, $s_1$.
The transitions are then given by \cref{alg:DPDA_part1,alg:DPDA_part2}, where, as before, 
the  symbol currently read is denoted $C$, and if any transition is not possible the automaton rejects the string.
Two lines of  \cref{alg:DPDA_part1} need further explanation. 
On \cref{alg:DPDA_energy1} the energy of a single local interaction is calculated, namely that of the last $k$ steps. 
This ensures that the $2k+1$ saved spins are exactly the spins with distance $\leq k$ from the local interaction. 
The function mapping $(L_s, I)$ to some energy $E$ is hardwired in the machine, as the domain of this function is finite. 
On \cref{alg:DPDA_energy2} we encounter a similar situation, with the only difference that the function is now from $(L_s, L_I)$ to some energy $E$---the domain is also finite and hence the function can be hardwired.

When pushing the energy onto the stack, the energy already on the stack may be of a different sign than the energy to be pushed ($E$). In this case, the automaton pops symbols from the stack until $E$ symbols have been removed, or until $Z$ is on top of the stack. 
In the latter case, the automaton pushes $u(E-P)$, where $P$ is the number of symbols removed.
This will always lead to a valid unary encoding of some energy on the stack.
Note that the number of simultaneous push/pops is bounded,  as the energy of a local Hamiltonian is bounded. 

\begin{figure}[htb]
\begin{algorithm}[H]
\caption{Transition of the DPDA, Part 1: Loading the spins into memory \label{alg:DPDA_part1}}

\begin{algorithmic}[1]
\State Assert that $C \in \Sigma_{q \missingspinsymbol}$
\While{$C \neq \sun$}
	\Comment{Append the spin to the state}
	\State Assert that $C \in \Sigma_{q \missingspinsymbol}$
	\If{$|L_s|\leq 2k+1$} \Comment{First fill $L_s$}
		\State Append $L_s \gets L_sC$
	\Else \Comment{Keep the last $2k +1 $ spins in $L_s$}
		\State Remove the  leftmost spin in $L_s$
		\State Append $L_s \gets L_sC$
	\EndIf
	\While{$C \neq \rimo$}\Comment{Read the local interaction into $i$}
		\State Move to the next symbol
		\State Assert $C \in [p] \cup \{\lemo,\rimo,\addressdelimiter ,\square, \diamond\}$
		\State Set $i \gets iC$
	\EndWhile
	\State Assert $i \in \mathcal{I}_{1D}$
	\If{$|L_I|\leq k$} \Comment{First fill $L_I$}
		\State Append $L_I \gets L_Ii$
	\Else 
		\State Calculate the energy from the  leftmost local interaction and push it onto the stack \label{alg:DPDA_energy1}
		\State Remove the leftmost  local interaction in $L_I$
		\State Append $L_I \gets L_Ii$
	\EndIf
	\State Reset $i \gets \epsilon$
	\State Move to the next symbol
\EndWhile
\State Calculate the energy from the remaining local interactions and push them onto the stack \label{alg:DPDA_energy2}
\algstore{DPDA}
\end{algorithmic}\end{algorithm}
\end{figure}

\cref{alg:DPDA_part2}  validates the energy, similarly as \cref{alg:DFA_part2}. The automaton is just popping one symbol from the stack for each identical symbol of the input. If the stack is empty once the head arrives at $\newmoon$, the automaton accepts.

\begin{figure}[htb]
\begin{algorithm}[H]
	\caption{Transition of the DPDA, Part 2: Validating the energy}
	\label{alg:DPDA_part2}
	\begin{algorithmic}[1]
	\algrestore{DPDA}
	\State Move to the next symbol
	\While{$C \neq \newmoon$}
		\If{$C$ is equal to the top of the stack}
			\State Pop one element from the stack
		\Else
			\State Reject
		\EndIf
		\State Move to the next symbol
	\EndWhile
	\If{$C = \newmoon$ and $Z$ is on top of the stack}
		\State Accept
	\Else
		\State Reject
	\EndIf
\end{algorithmic}
\end{algorithm}
\end{figure}

It is easy to verify that this procedure can be implemented by a DPDA because 
(i) the head has finitely many states  (namely those given by \eqref{eq:headDPDA}), 
(ii) the head has deterministic rules (given by \cref{alg:DPDA_part1} and \cref{alg:DPDA_part2}), 
(iii) the head pushes / pops a bounded number of symbols on the stack at every step, and 
(iv) the head moves to the right at every step. 

The idea of this construction is illustrated in \cref{fig:dpda}. 

\begin{figure*}[th]\centering
	\includegraphics[width=.7\columnwidth]{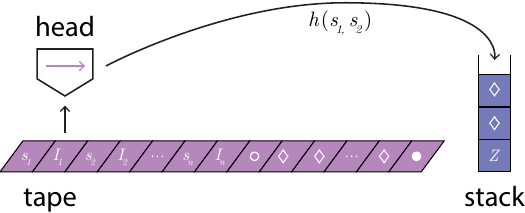}
	\caption{\textbf{A DPDA recognising the language of a 1D spin Hamiltonian.} 
	The tape initially contains the spin configuration intertwined with the local interactions, expressed as a string 
	$s_1, I_1, \ldots, s_n, I_n$, followed by a symbol $\sun$, an energy written in unary $\diamond \ldots \diamond$ and another symbol $\newmoon$. 
	This input string is in $L_{H}$  only if the energy is $H(s_1,I_1,\ldots, s_n,I_n)$. 
	A DPDA recognises this language as follows: it reads the state of the first two spins and local interactions (here for $k=2$), 
	calculates  $h(s_1, s_2)$, and stores this energy in the stack in unary (i.e.\ stores a number of diamonds  $\diamond$). 
	It proceeds similarly until the last pair of spins. Finally, it compares the energy stored in the stack with the value written on the tape, and accepts if and only if they coincide. 
	}
	\label{fig:dpda}
\end{figure*}

\paragraph{The language of a 1D spin Hamiltonian is not regular} We prove the statement with the following lemma: 

\begin{lemma}[Pumping lemma for regular languages, contrapositive form \cite{Ko97}] \label{lem:regularpumping}
	Let $L$ be a set of strings such that 
	for any $p\geq 1$ there exists strings $u$, $v$, $w$ with $uvw\in L$ and $|v|\geq p$, 
	such that for all decompositions of $v=xyz$ with $|y|\neq \epsilon$,
	there exists a $j \geq 0$ such that $uxy^jzw \notin L$. 
	Then $L$ is not regular.
\end{lemma}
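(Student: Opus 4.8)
The plan is to prove the \emph{contrapositive}: I will show that if $L$ is regular, then the hypothesis of the lemma fails, i.e.\ there exists some $p \geq 1$ such that for \emph{every} factorization $uvw \in L$ with $|v| \geq p$, one can exhibit a decomposition $v = xyz$ with $|y| \neq \epsilon$ satisfying $uxy^j zw \in L$ for \emph{all} $j \geq 0$. Since the statement of the lemma is precisely the negation of this ``pumpability'' assertion, establishing the positive form for regular $L$ immediately yields the claim.

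First I would invoke regularity to fix a deterministic finite-state automaton $M = (Q, \Sigma, \delta, q_0, F)$ recognizing $L$ with $|Q| = N$ states, and set the constant $p := N$. Let $\hat\delta$ denote the extended transition function. Given arbitrary $u, v, w$ with $uvw \in L$ and $|v| \geq p$, the idea is to track the run of $M$ \emph{within} the infix $v$: let $q := \hat\delta(q_0, u)$ be the state reached after reading $u$, and record the sequence of states $r_0, r_1, \ldots, r_{|v|}$ visited while reading $v$ one symbol at a time starting from $q$.

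The key step is a pigeonhole argument localized to $v$ (rather than to a length-$p$ prefix of the whole word, as in the textbook form): since this sequence has $|v| + 1 \geq N + 1$ entries while $M$ has only $N$ states, two entries coincide, say $r_a = r_b$ with $0 \leq a < b \leq |v|$. Writing $x$, $y$, $z$ for the portions of $v$ read before index $a$, between $a$ and $b$, and after $b$, respectively, gives $v = xyz$ with $|y| = b - a \geq 1$, hence $|y| \neq \epsilon$. Because reading $y$ returns the automaton from $r_a$ back to $r_b = r_a$, the factor $y$ acts as a loop, so reading $xy^j z$ from $q$ lands in the same state as reading $v = xyz$ from $q$, for every $j \geq 0$ (the case $j = 0$ included). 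Consequently $\hat\delta(q_0, uxy^j zw) = \hat\delta(q_0, uvw) \in F$, and therefore $uxy^j zw \in L$ for all $j \geq 0$, as required.

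I do not anticipate a genuine obstacle, since this is the standard loop-detection argument underlying the pumping lemma. The only point demanding care is that this variant confines the pumpable factor to the designated infix $v$ of length at least $p$, instead of to a prefix of length $p$; this is handled simply by initializing the pigeonhole count at the state $q = \hat\delta(q_0, u)$ and restricting attention to the states traversed while reading $v$, after which the prefix $u$ and suffix $w$ are carried along unchanged in the computation.
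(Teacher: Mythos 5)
Your proposal is correct. The paper does not actually prove this lemma---it is quoted directly from the cited reference [Ko97] (Kozen), where it appears as the strong, infix-localized form of the pumping lemma---so your argument supplies exactly the standard proof that the citation points to: negate the hypothesis correctly (there exists $p$ such that \emph{every} factorization $uvw \in L$ with $|v|\geq p$ admits a pumpable decomposition of $v$), set $p$ equal to the number of DFA states, and run the pigeonhole argument on the states visited while reading $v$ starting from $\hat\delta(q_0,u)$, rather than on a prefix of the whole word. That localization is the one point where this variant differs from the textbook prefix version, and you handle it correctly; the loop $r_a = r_b$ inside $v$ gives $\hat\delta(q_0, uxy^jzw) = \hat\delta(q_0,uvw) \in F$ for all $j \geq 0$, which is all that is needed.
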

For a given $p$ we choose strings 
$u=e(s) \sun$, $v=u(H(s))$ with $s \in \mathcal{D}$ such that $|H(s)|>p$, and $w=\newmoon$.  
 Note that  such an $s$ exists because  the image of $H$ is unbounded (otherwise it would be an effectively 0D spin Hamiltonian).
This ensures that the string $uvw=e(s) \sun u(H(s)) \newmoon$
is an element of $L_{H}$ with $|v|\geq p$. Any choice of $xyz=v$ and $j\neq 1$ will result in a malformed string $uxy^jz w\notin L_{H}$, since the spin configuration and the energy no longer match. Therefore $L_{H}$ is not regular.

\subsection{Proof of \cref{thm:main}  \ref{thm:dD} \label{sssec:proofofmain(iii)}}

\paragraph{The language of $d$D spin Hamiltonians is context-sensitive}

We now present a linear bounded automaton (LBA) which recognises the language of a $d$D spin Hamiltonian. For simplicity, we use an LBA with multiple tracks, where the input is written on track $1$, and the other tracks are used for calculations and bookkeeping tasks. Formally, the tape alphabet consists of $N_T$-tuples $(S_1, \ldots, S_{N_T})$ of symbols, where $N_T$ is the number of tracks.

To show that the following algorithm can be implemented by an LBA we only need to make sure that 
(i) the head has a finite number of states and tracks, and 
(ii) the head never moves past the end or before the beginning of the input string on the tape.

In the proof we will leverage the fact that concatenating LBAs gives rise to another LBA (\cref{lem:conc_lba}),
and that a set of basic operations, namely \textbf{Move},   \textbf{Copy}, \textbf{Add} and \textbf{Multiply}, can be carried out by an LBA. Let us present these  ingredients now.

\begin{lemma}[Concatenation of LBAs]\label{lem:conc_lba}
	Let $A_1$ and $A_2$ be two LBAs with the same alphabet $\Sigma$, and with $N_{T,1}, N_{T,2}$ tracks, respectively. 
	Let each automaton have two designated tracks called input and output. 
	Then there is an LBA $A_3$ which first runs $A_1$, uses the output of $A_1$ as the input of $A_2$, and then runs $A_2$.  Moreover 	$A_3$ uses at most $\max\{N_{T,1}, N_{T,2}\}$ tracks.
\end{lemma}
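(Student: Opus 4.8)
The plan is to construct $A_3$ by running $A_1$ and then $A_2$ in sequence on a single shared tape, \emph{reusing} the tracks rather than juxtaposing them. The first step is to fix the track layout. Since a ``track'' is merely a coordinate of the product tape alphabet, the tracks of each automaton may be relabelled by an arbitrary permutation without changing its behaviour. I would therefore relabel the tracks of $A_1$ so that its output track becomes track $1$, and relabel the tracks of $A_2$ so that its input track becomes track $1$. After this identification, $A_1$ occupies the tracks $\{1,\dots,N_{T,1}\}$ and $A_2$ occupies the tracks $\{1,\dots,N_{T,2}\}$, so their union uses exactly $\max\{N_{T,1},N_{T,2}\}$ tracks. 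This overlap---made possible because the scratch tracks are reused after $A_1$ finishes---is precisely what yields the $\max$ rather than the sum in the bound.

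Next I would define $A_3$ to operate in three phases, taking its state set to be the disjoint union of the states of $A_1$, the states of $A_2$, and a finite set of auxiliary ``clean-up'' states. In Phase 1, $A_3$ simulates $A_1$ step for step, its transitions reading and writing only the tracks assigned to $A_1$ and leaving the remaining tracks untouched. When $A_1$ reaches its designated halting state (at which point its result sits on track $1$), $A_3$ instead enters Phase 2 rather than terminating. In Phase 2 the head performs a single sweep between the two endmarkers, overwriting every track except track $1$ with the blank symbol, and then rewinds to the left endmarker. This guarantees that when $A_2$ starts it finds a clean scratch space together with its designated input exactly on track $1$, which is where $A_1$ deposited its output. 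In Phase 3, $A_3$ simulates $A_2$ step for step and halts or accepts exactly when $A_2$ does. The construction manifestly preserves determinism, so it applies equally to deterministic and nondeterministic LBAs.

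I would then verify that $A_3$ is a legitimate LBA by checking the two conditions flagged before the lemma. It has finitely many states (a finite disjoint union) and finitely many tracks, namely $\max\{N_{T,1},N_{T,2}\}$. Moreover its head never leaves the input region: in Phases 1 and 3 this is inherited from the fact that $A_1$ and $A_2$ already respect the endmarkers, while the Phase 2 sweep is by construction confined between them. Hence $A_3$ uses space bounded by the input length and is an LBA.

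I expect the main obstacle to be the bookkeeping at the \emph{interface} between the two simulations: namely, establishing that the configuration $A_2$ inherits---head at the left endmarker, input on track $1$, all other tracks blank---is exactly the legal start configuration that $A_2$ expects, and that the scratch tracks can be cleared within the space bound without ever disturbing the output of $A_1$ on track $1$. Care is also needed if the formal definition of an LBA requires the input to be delimited by endmarkers of a prescribed form: one must check that the output of $A_1$ either already carries these endmarkers or can be made to carry them, which the Phase 2 sweep can also be tasked with installing.
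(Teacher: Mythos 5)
Your proposal is correct and follows essentially the same route as the paper's proof: simulate $A_1$ on a tape with $\max\{N_{T,1},N_{T,2}\}$ tracks, sweep the tape once in an intermediate ``clean'' phase to blank all tracks except the one holding $A_1$'s output, then simulate $A_2$ treating that track as its input. Your track-relabelling remark and the endmarker/start-configuration checks are just more explicit bookkeeping for the same construction.
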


\begin{proof}
	We construct an LBA $A_3$ with $N_T = \max\{N_{T,1}, N_{T,2}\}$ tracks which first executes all the moves of $A_1$ and writes the output to some track $O_1$. When it reaches any of the final states $F_1$ of $A_1$, it transitions to a state \textbf{clean}, in which it traverses the tape once and resets all tracks  (except $O_1$) to the empty symbol. Then it transitions to the initial state
	of $A_2$, moves to its starting position and does all moves of $A_2$ with track $O_1$ as the input.
\end{proof}

Note that the same holds also for multiple input and output tracks, as a finite set of tracks $\{T_1,  \ldots , T_n\}$ can be redefined as a single track $T_{1,2,\ldots,m}$, where the symbols are   $n$-tuples of symbols from the other tracks.
Furthermore, if $A_2$ has no output (e.g.\ if it just checks some properties of the input) we can construct an LBA $A_2'$ which applies $A_2$ but outputs the input unchanged if the check succeeds. $A_2'$ first copies the input to an additional track (the new output) and then proceeds to apply $A_2$ to the input.

We now define the following basic operations, which can be done by an LBA as the following descriptions show: 

\textbf{Move $m$} takes an input track $I$ and moves all symbols to the right $m$ positions  (or to the left if $m<0$) for a fixed, predefined $m$. This is only possible when there are enough empty symbols on the right (left) end of the track. The machine then starts at the rightmost (leftmost) non-empty symbol, saves it, moves $m$ positions to the right and writes the symbol to the string. Then it moves $m+1$ positions to the left (right) and repeats the cycle. After the leftmost (rightmost) symbol is moved, the $m$ leftmost (rightmost) symbols are cleared.

\textbf{Copy} takes one input track $I$ and outputs $O$ with the same content as $I$ and leaves $I$ untouched. It first cleares $O$. Then it traverses the string and copies each symbol from $I$ to $O$.

\textbf{Add} takes two input tracks $I_1$ and $I_2$ containing valid unary encodings $u(a_1), u(a_2)$ (left aligned) and outputs $u(a_1 + a_2)$ on track $I_2$ while leaving $I_1$ untouched. Note that this only works if the tape has size at least  $|a_1+a_2|$.
The automaton works in the following way:
It first sets a bookmark on an auxiliary track at the first (leftmost) symbol of $I_1$ and saves the first symbol of $I_1$ to its state.
Then it moves to the last (non-empty) symbol of $I_2$ and appends or deletes a symbol, depending on the signs of $a_1$ and $a_2$. 
Next, it returns to the bookmark, moves the latter one position to the right, saves the corresponding symbol on track $I_1$ and moves again to the last symbol of $I_2$. This process is repeated until the bookmark reaches the end of $I_1.$

\textbf{Multiply} takes two input tracks $I_1$ and $I_2$ containing valid unary encodings $u(a_1), u(a_2)$ and outputs $u(a_1 \cdot a_2)$ on a track $O$ while leaving $I_1$ and $I_2$ untouched. Again, this works only if the tape has at least size $|a_1 \cdot a_2|$.
In the first step, the automaton copies $I_1$ to an auxiliary track $A_1$ and initialises $O$ to $\epsilon = u(0)$. Next, it uses \textbf{Add} to add $I_2$ to $O$ and then removes the rightmost symbol of $A_1$. This is repeated until $A_2$ is empty. Depending on the signs of $a_1$ and $a_2$ the sign of $O$ will be flipped in the end.

\paragraph{LBA recognising the language of a $d$D spin Hamiltonian} 
 
We denote the number of spins by $n$ and the length of the input by $N$. 
Note that the length of the encoding of any local interaction (in the input) is bounded by $|\gamma(I)|<N$ and that the number of possible local interactions (for a given $n$) $|\mathcal{I}_{n}|$ is bounded by a constant independent of $n$ (due to the bounded interaction range).

The LBA uses the following subroutines, which can be combined using \cref{lem:conc_lba}: 
\begin{enumerate} 
	\item \textbf{Validate pattern} \label{pag:LBA}
	\item \textbf{Count system size}
	\item \textbf{Compute side lengths}
	\item \textbf{Compute all possible local interactions} 
	\item \textbf{Validate the local interactions}
	\item \textbf{Evaluate local Hamiltonians}
	\item \textbf{Validate energy}
\end{enumerate}

\textbf{Validate pattern} validates the following pattern:
\begin{equation}\label{eq:LBA_pattern}
	\left( \Sigma_{q\missingspinsymbol}  [p] \lemo (U_\missingspinsymbol \addressdelimiter )^{k-1} U_\missingspinsymbol \rimo \right) ^+ \sun U \newmoon
\end{equation}
where $\Sigma^+=\Sigma^*\setminus\{\epsilon\}$ matches one or more elements from $\Sigma$,
 and $a+b$ means matching either $a$ or $b$, and we have defined the pattern matching an energy as 
 $U =\diamond^* + \square^* $ and an address as 
 $U_\missingspinsymbol =\diamond^* + \square^* + \missingspinsymbol $. 
Since this is just a regular expression,  there is an LBA which achieves this task. 
For the rest of the proof   we assume that the strings is well-formed. 

\textbf{Count system size}  counts the number of spins and writes the result on track $2$ in unary. First it adds a $\diamond$ on track $T_n$ for each spin symbol, or $\missingspinsymbol$ left of a symbol from $[p]$ in the input. Then it moves all those diamonds to the left to form a valid unary encoding. This only needs one additional track.

To \textbf{Compute the side lengths}
 recall that $n_1 = m \ell_1$ and so on until $n_d = m \ell_d$ for some known lattice $R_0 = (\ell_1, \ldots,\ell_d)$, and that $n = n_1 \cdots n_d =  m^d \ell$.
The machine first writes down $\ell_1$ on auxiliary track $A_{(0),1}$ and so on until $\ell_d$ on track $A_{(0),d}$. It also initialises  $T_{n,1}, \ldots T_{n,d}$ and $T_{N,1}, \ldots, T_{N,d}, A_c$ to $0$. Then it repeats the following steps:
\begin{enumerate}
	\item Increment $A_c$ by $1$
	\item For $i=1,\ldots, d :\ \mathbf{Add}(A_{(0),i}, T_{n,i})$
	\item \textbf{Copy} $T_{n,1}$ to $T_{N,1}$
	\item For $i=2,3,\ldots, d :\ \mathbf{Multiply}(T_{N,i-1}, T_{n,i}) \to T_{N,i}$ 
	\item $T_{N,d}$ now contains $(A_c)^d \cdot \ell $. Compare this to $n$ on track $T_n$.
	\item Exit the loop if $T_{N,d} = n$, continue if $T_{N,d} < n$, reject if $T_{N,d} > n$.
	\item Clear tracks $T_{N,i}$
\end{enumerate}
Tracks $T_{n,1},  \ldots, T_{n,d}$ now contain the side lengths $n_1,   \ldots, n_d$ and tracks $T_{N,i}$ contain $N_i = n_1 n_2 \cdots n_i$. Clear all other tracks. Note that the automaton can only run out of space if $T_{N,d} > n$, in which case it rejects. Furthermore, the number of tracks is independent of $n$.

\textbf{Compute all possible local interactions} writes out all possible local interactions explicitly. It needs at most $|\mathcal{I}_{n,j}| 2^k ( k d +1)$ tracks, named $T_{I, j}$ and  $T_{I, j, V_l^m}$, plus a fixed number of auxiliary tracks. 
On tracks $T_{I,j}$ the local interactions are written down, including all $2^k$ combinations of replacing an address with $\missingspinsymbol$. 
On track $T_{I,j, V_l^m}$ the corresponding moves are placed,  namely move $v_m$ of address $l$ of local interaction $j$ is put onto track $T_{I,j, V_l^m}$. Note that there are only finitely many possible local interactions, so they can essentially be hardwired into the machine.

Suppose that the automaton is to write local interaction number $j$ to track $T_{I,j}$. First, it puts the choice for the local Hamiltonian $\alpha \in [p]$ and the symbol $\lemo$ on track $T_{I,j}$.
Next, remember that the  addresses $A_l$ can be decomposed as 
$A_l = \sum_{i=0}^{d-1} v_i N_i$, 
where $N_0=1$ by definition. Note that $N_1, N_2, \ldots$ are given on tracks $T_{N,1}, T_{N,2}, \ldots$. 

For each address $l$ the machine first writes $v_m$ on tracks $T_{I,j, V_m^l}$ for $m=0,1,\ldots,d-1$, and then uses \textbf{Multiply} and \textbf{Add} to calculate $A_l$ on an auxiliary track. Then the address is copied to track $T_{I,j}$ followed by the symbol $\addressdelimiter$. If there is not enough space the machine continues with the next local interaction, as this local interaction is certainly not part of the input (otherwise there would be enough space). 
If some addresses are replaced with $\missingspinsymbol$ the machine still writes the moves on tracks $T_{I,j, V_m^l}$ but then skips the calculation and directly writes $\missingspinsymbol$ to the corresponding track.
Finally the machine clears all auxiliary tracks and continues with the next address.
After the last address the machine uses $\rimo$ instead of $\addressdelimiter$. Then it again resets all auxiliary tracks and continues with the next local interaction.

 \textbf{Validate the local interactions}  marks the $\lemo$ symbol of the first local interaction on an auxiliary track and then executes the following steps for each local interaction: 
\begin{enumerate}
	\item Copy the marked local interaction to an auxiliary track and move it all the way to the left
	\item Count the current index
	\item Decompose $j= \sum_{i=0}^{d-1} a_i N_i +1$
	\item Mark all valid candidates for the local interaction
	\item Validate that no edges are skipped (for each valid candidate)
	\item Clear all auxiliary tracks and move the bookmark to the next local interaction
\end{enumerate}

The first step looks for the bookmark and then copies the local interaction as described.
Next, the automaton counts $j$ (the position of the current local interaction) by adding $\diamond$ to an auxiliary track $T_{(j)}$ for each $\lemo$ symbol until (including) the bookmark, and then moving all $\diamond$ symbols to the left. In the third step, in order to  decompose $j$,  the machine first subtracts $1$ by deleting the rightmost $\diamond$, and  it initialises tracks $T_{a_0}, T_{a_1}, \ldots, T_{a_{d-1}}$ to hold $a_0, a_1, \ldots, a_{d-1}$. Then it subtracts $N_{d-1}$ from $T_{(j)}$ as long as $T_{(j)}\leq N_{d-1}$, and adds $\diamond$ to $T_{a_{d-1}}$ for each successful subtraction. It continues with $N_{d-2}, N_{d-3}$ until $N_0=1$. Once track $T_{(j)}$ is empty, tracks $T_{a_i}$ contain the decomposition of $j$.

The fourth step  compares symbol by symbol the copied local interaction to all the precomputed ones. As there are only finitely many precomputed local interactions the machine can use the states to keep track of the valid candidates. Note that while some precomputed local interactions can be equal because the addresses may have been replaced with $\missingspinsymbol$,  the corresponding vectors $\bm{v}$ are different.
Once the machine has traversed a local interaction in the input, 
the state contains a set of local interactions (the candidates) which match the given interaction. It is however not yet clear whether those interactions are allowed at the given position (i.e. whether $\missingspinsymbol$ is  in the correct places).

The penultimate step checks whether the missing spin symbols $\missingspinsymbol$ are placed correctly. For each candidate, the automaton validates the following condition for each address (numbered $l=1,2,\ldots,k$): 
The address is $\missingspinsymbol$ if and only if 
\begin{equation}\label{eq:proofdD:condI}
	\exists m \in \{0,1, \ldots , d-1\} : \ 0\leq a_m+v^l_m <n_{m+1}
\end{equation}
where $v^l_m$ is given on track $T_{I, j, V^l_m}$ and $n_m$ on track $T_{n,m}$. The automaton first checks whether the address specified in the local interaction is $\missingspinsymbol$,   adds the content of two tracks,  and compares the value to a third one $d$ times. As $n_m\leq n$ we have enough space for the addition. Namely,  $a_m+v^l_m \leq 2n \leq N $ for all valid cases. If the check succeeds, the machine continues with the next step. Otherwise the auxiliary tracks used for this step are cleaned and the machine continues with the next candidate. If no candidate is left the machine rejects.

Finally, the automaton clears the auxiliary tracks used and moves the bookmark to the next local interaction. If $\sun$ is reached, the automaton continues to the next step.

 \textbf{Evaluate the local Hamiltonian}  sets a bookmark at the first local interaction, and the spins specified by the addresses in the local interaction are copied to an auxiliary track $A_s$. To achieve this, the automaton copies an address to an auxiliary track and then moves a second bookmark as specified by the address (by moving the bookmark one spin and then deleting a symbol from the copied address). Then it copies the corresponding spin to $A_s$, clears the auxiliary track, resets the second bookmark and continues with the next address.
Next, the machine calculates $h_{\alpha}$ of the spins on track $A_s$. The corresponding energy is hardwired into the machine (as there are only finitely many local Hamiltonians with a finite domain each). The energy is written to a track $T_{E}$. If there are large intermediate energies multiple tracks are needed, but never more than the largest absolute local energy $\max_{\alpha \in [p]} \max_{x \in \Sigma_{q\missingspinsymbol}^k} |h_{\alpha}(x)|$ which is independent of $n$. Once one local Hamiltonian is evaluated, the bookmark is moved to the next interaction; if all local Hamiltonians are evaluated, the energy fits in one track (otherwise it is longer than the one given in the input and we reject).

Finally, \textbf{Validate energy} compares the energy on $T_{E}$ to that on the input. If they are equal it accepts, otherwise it rejects.

\paragraph{The language of a $d$D spin Hamiltonian is not context-free}
We prove this statement by exploiting the pumping lemma: 

\begin{lemma}[Pumping lemma for context-free languages, contrapositive form \cite{Ko97}]
	Let $L$ be a set of strings such that for any $p\geq 0$ there exists a string $z \in L$, such that for all decompositions of $z = uvwxy$ with $vx \neq \epsilon$ and $|vwx|\leq p$, there exists an $i \geq 0$ such that $uv^iwx^iy \notin L$.
	Then $L$ is not context-free.
	\label{lem:contextfreepumping}
\end{lemma}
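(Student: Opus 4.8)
The plan is to prove the positive (direct) form of the pumping lemma and then simply read off the stated contrapositive. Concretely, I would assume $L$ is context-free and exhibit a constant $p$ such that every sufficiently long $z \in L$ admits a decomposition $z = uvwxy$ with $vx \neq \epsilon$, $|vwx| \leq p$, and $uv^iwx^iy \in L$ for all $i \geq 0$; contraposing this implication yields exactly \cref{lem:contextfreepumping}. First I would put a context-free grammar $G$ generating $L \setminus \{\epsilon\}$ into Chomsky normal form, so that every production is either $A \to BC$ (two variables) or $A \to a$ (one terminal). Let $b$ be the number of variables of $G$ and set $p = 2^{b}$.

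The key structural fact is that, because every internal node of a derivation tree now has exactly two children, a parse tree of height $h$ yields a string of length at most $2^{h-1}$. Contrapositively, any $z \in L$ with $|z| \geq p = 2^{b}$ has every parse tree of height at least $b+1$, and hence possesses a root-to-leaf path passing through at least $b+1$ internal (variable) nodes. I would then invoke the pigeonhole principle on such a longest path: among its lowest $b+1$ variable nodes, two must carry the same variable $A$. Fixing the two lowest such occurrences, let the upper $A$ derive the factor $vwx$ and the lower $A$ derive $w$, so that $S \Rightarrow^* uAy$, $A \Rightarrow^* vAx$, and $A \Rightarrow^* w$.

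Iterating the self-embedding derivation gives $A \Rightarrow^* v^iAx^i \Rightarrow^* v^iwx^i$ for every $i \geq 0$, whence $uv^iwx^iy \in L$, which is the pumping property. Because the chosen occurrences lie within the bottom $b+1$ variables of the path, the subtree rooted at the upper $A$ has height at most $b+1$ and therefore yields $|vwx| \leq 2^{b} = p$, supplying the required length bound.

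The step I expect to be the main obstacle is establishing $vx \neq \epsilon$ \emph{simultaneously} with $|vwx| \leq p$, since both constraints hinge delicately on the normal form. The inequality $|vwx| \leq p$ forces the repetition to be chosen near the bottom of the path, while $vx \neq \epsilon$ uses that the upper $A$ applies a binary rule $A \to BC$: the lower $A$ lies in the subtree of exactly one of $B$ and $C$, and the other child derives a nonempty terminal string (as Chomsky normal form admits no $\epsilon$-productions below the start symbol), which contributes a nonempty factor to $v$ or to $x$. I would also flag the harmless caveat that the lemma is meant to be applied to long witnesses: for $z = \epsilon$ or very short $z$ no admissible decomposition with $vx \neq \epsilon$ exists, so the hypothesis of \cref{lem:contextfreepumping} is to be read as furnishing strings $z$ with $|z| \geq p$.
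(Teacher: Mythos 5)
The paper never proves this lemma: it is imported verbatim from the literature (Kozen \cite{Ko97}) and used as a black box in the proofs of \cref{thm:main} \ref{thm:dD} and \ref{thm:all}, so there is no in-paper argument to compare against. Judged on its own, your proposal is the canonical textbook proof of the direct pumping lemma---Chomsky normal form, the bound $2^{h-1}$ on the yield of a CNF parse tree of height $h$, pigeonhole among the lowest $b+1$ variable occurrences on a \emph{longest} root-to-leaf path, and iteration of the self-embedding derivation $A \Rightarrow^* vAx$---followed by contraposition, and it is correct. In particular the two points you single out go through exactly as you say: $|vwx| \leq 2^{b} = p$ because the repeated variable is chosen within the bottom $b+1$ variables of a longest path (so the upper occurrence roots a subtree of height at most $b+1$), and $vx \neq \epsilon$ because the upper $A$ expands by a binary rule $A \to BC$ while CNF for $L \setminus \{\epsilon\}$ has no $\epsilon$-productions, so whichever child does not contain the lower $A$ contributes a nonempty factor to $v$ or $x$.

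One correction to your closing caveat. You are right that the lemma as stated omits the condition $|z| \geq p$ and must be read as furnishing long witnesses---and indeed the paper always applies it that way, choosing $|s| \geq p$ (resp.\ $|s| > p$). But your stated reason is wrong: for short nonempty $z$, admissible decompositions with $vx \neq \epsilon$ and $|vwx| \leq p$ \emph{do} exist, and that is precisely why the literal statement fails. For example, $L = \{a\}$ satisfies the hypothesis as written: take $z = a$ for every $p \geq 1$; any decomposition with $vx \neq \epsilon$ must have $v = a$ or $x = a$, and pumping with $i = 0$ gives $\epsilon \notin L$. Yet $L$ is finite, hence context-free. The actual repair is that the direct lemma guarantees a pumpable decomposition only when $|z| \geq p$, so the contrapositive hypothesis must demand witnesses of length at least $p$; with that reading, your contraposition of the direct form yields exactly the lemma, and matches the way the paper uses it.
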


Fix $p>0$ and choose any $s \in \mathcal{D}_{dD}$ such that $|s|\geq p$, and 
let $z = e(s)\sun u(H(s)) \newmoon$. 
The string $z$ can be split into substrings $ z = uvwxy$ in the following ways:
\begin{enumerate}
	\item $v$ or $x$ contain $\sun$ or $\newmoon$
	\item Either $\sun \in y$, or $\sun \in w$ and $x = \epsilon$
	\item $\sun \in w$ and neither $x$ nor $v$ are empty
	\item Either $\sun \in u$, or $\sun \in w$ and $v = \epsilon$
\end{enumerate}
For any of these cases there is $i$ such that $uv^iwx^iy \notin L_H$.
The first case is trivial as both $\newmoon$ and $\sun$ can only appear once, so any $i \neq 1$ yields a malformed string.
Case 4 is also trivial, as any $i \neq 1$ yields an energy not equal to $u(H(x))$.
For cases 2 and 3, assume that $uv^iwx^iy$ still matches the pattern given in \cref{eq:LBA_pattern} (otherwise the string is malformed) and that no single address is pumped (which would lead to an unbounded interaction range). 
Then the new string $uv^iwx^iy$ contains a different number of spins $n'(i)  = n + m  (i-1)$ for some $m \in \N$ which needs to be equal to the size of some lattice $n_0 l^d$ for some $l \in \N$ for the string to be in $L_H$. 
 However, the distance between consecutive numbers $n_0 l^d, \ n_0 (l+1)^d$ is at least 
 $ n_0 d l^{d-1}$. For $l$ large enough this is larger than $m$,  and hence there is at least one $i$ such that $n_0 l^d < n'(i) < n_0 (l+1)^d$, which implies that $s \notin L_H$.

\paragraph{Proof of \cref{thm:main} \ref{thm:dD} for non-rectangular lattices}

Any $d$D spin Hamiltonian $H'$ on a non-rectangular lattice can be seen as a $d$D spin Hamiltonian $H$ on a rectangular lattice with a non-trivial unit cell. 
Explicitly, the spin alphabet $\Sigma_q$ in $H$ now contains tuples of the spin alphabet of $H$', where the size of the tuple is given by the number of spins in the unit cell. Since $\Sigma_q$ is in the domain of the local Hamiltonians, these are redefined accordingly. This gives rise to an $H$ defined on a rectangular lattice. Therefore the above proof showing that its language $L_H$ is context-sensitive and not context-free applies here too.

\subsection{Proof of \cref{thm:main} \ref{thm:all} \label{sssec:proofofmain(iv)}}

\paragraph{The language of an all-to-all spin Hamiltonian is context-sensitive}
We provide an LBA to recognise the language of all-to-all spin Hamiltonians by reusing parts of the LBA of the $d$D case. 
This LBA performs the following subroutines: 
\begin{enumerate}
	\item \textbf{Validate pattern}
	\item \textbf{Validate local interactions}
	\item \textbf{Evaluate local Hamiltonians}
	\item \textbf{Validate energy}
\end{enumerate}
In the first step, the pattern is adjusted to allow for a list of local interactions. 
The second step  works  as follows: 
For each spin $j\leq n-k$ the automaton executes \cref{alg:LBA_alltoall}.  
In \cref{alg:LBA_alltoall:next_addresses},  the next address is defined by the order of  \eqref{eq:order}. 
Following this order can be done by an LBA as follows: 
first the automaton attempts to increment $A^{k-1}$ by 1, which is only possible if $A^{k-1} < n-j$. 
If this does not succeed, it attempts to increment  $A^{k-2}$ by 1, 
which is only possible if  $A^{k-2} < n-j-1$.  
If incrementing $A^{k-2}$ by 1 succeeds, it sets $A^{k-1}=A^{k-2}+1$. 
If it does not succeed, it continues with  $A^{k-3}$ recursively. 
In the third step, evaluating the local Hamiltonians uses the same subroutine as in the $d$D case, 
as this does not depend on the number of local interactions per spin. 
The comment about the maximal number of tracks needed also holds in this case. 
That is, at most $\max_{\alpha \in [p]} \max_{x \in \Sigma_{q\missingspinsymbol}^k} |h_{\alpha}(x)|$ tracks are needed.
The fourth step is the same as for the $d$D case. 

\begin{figure}[htb]
\begin{algorithm}[H]
	\caption{Validating local interactions for all-to-all spin Hamiltonians}
	\label{alg:LBA_alltoall}
	\begin{algorithmic}[1]
	\For{all spins}
		\State Put the index of the current spin $j$ on track $T_s$
		\State Initialise working tracks $A^0,  \ldots, A^{k-1}$ to $0,\ldots, k-1$.
		\For{all local interactions of spin $j$}
			\State Copy the addresses of the local interaction to tracks $B^0, \ldots, B^{k-1}$.
			\State Assert $B^l=A^l$ for all $l=0, \ldots,k-1$.
			\If{$(A^0, A^1, \ldots, A^{k-1}) = (0,n-j-k+2,\ldots,n-j)$ }
				\State Go to the next spin
			\Else
				\State Set $A^0,  \ldots, A^{k-1}$ to the next address \label{alg:LBA_alltoall:next_addresses}
			\EndIf
		\EndFor
	\EndFor
\end{algorithmic}
\end{algorithm}
\end{figure}

\paragraph{The language of an all-to-all spin Hamiltonian is not context-free}
To prove this statement we again use the pumping lemma (\cref{lem:contextfreepumping}). 
	So fix $p>0$, choose any $s \in \mathcal{D}_{\mathrm{all}}$ such that $n=|s|> p$ and $n>k$, and let $z = e(s)\sun u(H(s)) \newmoon$. 
	The string $z$ can be split into substrings $ z = uvwxy$ in the following ways:
	\begin{enumerate}
		\item $v$ or $x$ contain $\sun$ or $\newmoon$
		\item Either $\sun \in y$, or  $\sun \in w$ and $x = \epsilon$ 
		\item $\sun \in w$ and neither $x$ nor $v$ are empty
		\item Either $\sun \in u$, or $\sun \in w$ and $v = \epsilon$
	\end{enumerate}
	As above, cases 1 and 4 are trivial. 
	For cases 2 and 3, denote by $\ell_{\mathrm{in}}(z)$ the length of the string $z$ until $\sun$ (i.e.\ without the energy). As the only choice for the input are the spins and the local Hamiltonians (which are reflected by one symbol each), $\ell_{\mathrm{in}}(z)$ only depends on $n$ (for $z \in L_H$), and so we write $\ell_{\mathrm{in}}(z)=\ell_{\mathrm{in}}(n)$ for any valid $z$ with $n$ spins. Comparing inputs of size $n$ and $n+1$ we see that no local interactions are removed. 
	That is, all local interactions present in an input of size $n$ are also present in an input of size $n+1$ (up to different choices of the local Hamiltonians). Hence $\ell_{\mathrm{in}}(n+1)-\ell_{\mathrm{in}}(n)$ can be lower bounded by the difference in the number of local interactions (as each local interaction needs more than one symbol to be encoded).
	This difference is given by  
	\begin{equation}
		{{n+1}\choose{k}} - {{n}\choose{k}} = {{n}\choose{k-1}} \geq  
		 n > p
	\end{equation}
	where we have used that $k\geq 2$ (cf.\ \cref{def:all-to-all}), 
	$n>k$ and $n>p$. 
	Choosing $i=2$ implies that $|uv^2wx^2y| - |uvwxy|\leq p$ and thus also $\ell_{\mathrm{in}}(uv^2wx^2y) -\ell_{\mathrm{in}}(uvwxy) \leq p$. Hence $uv^2wx^2y$ is not in $L_H$.

\subsection{Proof of \cref{thm:main} with other boundary conditions \label{sssec:proofofmainother}}

We now show that all statements of \cref{thm:main} also hold for other boundary conditions. 
First, a $0D$ spin Hamiltonian  with periodic boundary conditions also gives rise to a finite language, which is thus regular. 
On the other hand, we claim that an effectively $0D$ spin Hamiltonians $H$ with periodic boundary conditions  must be the constant function. To prove this claim, note that the configurations can  be viewed as rings of spins and interactions. Such a ring can be doubled by cutting it open at any position, inserting a copy and putting it back together. Formally, take any configuration $x=(s_1, I_1, s_2, I_2, \ldots, s_n, I_n)$ and map it to $x^2 \coloneqq (s_1, I_1, \ldots, s_n, I_n, s_1, I_1, \ldots, I_n)$. Then, $H(x^2)= 2 H(x)$ as each interaction $I_j$ occurs exactly twice in $x^2$ and the values of the spins addressed do not change. Hence, for the image of $H$ to be bounded it is necessary that $H\equiv 0$.
It trivially follows that there is a DFA accepting the language, as it only needs to validate whether the string is well-formed and whether the energy is zero.

The complexity of 1D spin Hamiltonians with periodic boundary conditions does not change either. To see this, we slightly modify the DPDA so that it saves the $2k+1$ first spins and $k$ first local interaction in its head, 
and only evaluates them once it reaches $\sun$, using also the saved $k$ last spins. The pumping lemma can be applied in exactly the same way, as only the energy was pumped.

Finally, for $dD$ spin Hamiltonians with other boundary conditions we can apply the pumping lemma in the same way, as it exploits the fact that the string needs to represent a lattice of points. 
To show that the language is context-sensitive the LBA is modified as follows. 
For any address and any periodic edge $i$ the automaton checks whether it would jump over this edge (by checking $0\leq a_i+v_i\leq n_{i+1}$). If a jump occurs, the automaton calculates the new address. For all directions with a jump this is done by $v'_i= a_i+ v_i \mod{n_{i+1}}$; for the directions with open boundary conditions, a rotation can be applied (e.g. by $v'_l= n_{l+1} - 2 a_l$), and the rest remains unchanged. The new address is then given by $\sum_i v'_i N_i$. Note that this address is never longer than $n$ as all $0\leq v'_i < n_{i+1}$, and all operations can be performed by an LBA  (addition, subtraction, modulo). 
The other parts of the algorithm of the LBA stay unchanged. This modification thus results again in an LBA and  the language remains context-sensitive.

\section{Proof of \cref{pro:computcompl} (Computational complexity of $L_H$) \label{ssec:proofofcomputcompl}}

The proof of \cref{thm:main} has provided automata to recognise $L_H$ for each  $H$ in each of the considered classes. 
Since each of these automata are a particular case of a deterministic Turing machine, 
showing that each of these algorithms runs in polynomial time will prove the result. 
Note that this will not result in optimal running times. 
Throughout we use the $O$ notation, where we say that an algorithm is $O(f(N))$ if its running time  grows not faster in the input length $N$ than some function $f(N)$.

\subsection{Proof of \cref{pro:computcompl} \ref{pro:computcompl:0D} and \ref{pro:computcompl:eff0D}}

For $L_H$ of a 0D and an effectively 0D spin Hamiltonian, the deterministic finite automaton's (DFA) running time is linear in the input size, 
because---apart from a constant overhead---each transition moves the head one symbol to the right and the computation stops once all symbols are read. 

\subsection{Proof of \cref{pro:computcompl} \ref{pro:computcompl:1D}}

For $L_H$ of a 1D spin Hamiltonian,  the computational complexity of the deterministic pushdown automaton (DPDA) is linear in the input size. This is because a DPDA can be seen as a DFA with a stack, and the transitions which push/pop multiple symbols are unproblematic, as the number of pushes/pops is bounded independently of the system size. Thus, this DPDA runs in linear time. 
When simulating the DPDA with a Turing machine, the Turing machine does not have a stack and needs to simulate it by using a parallel track or writing and working at the end of the input. 
For each spin in the input, the Turing machine needs to run to the end of the input (or to the parallel track), which takes $O(N)$ time, where $N$ is the size of the input. Since there are $n$ spins, and $N=O(n)$ (since the input consists of the spins and the local interactions, as well as the energy, which needs at most $|E_{\max}-E_{\min}|n$ symbols, where $E_{\max}$ and  $E_{\min}$ are the maximal and minimal energies of all the local Hamiltonians), the Turing machine runs in $O(N^2)$ time.

\subsection{Proof of \cref{pro:computcompl} \ref{pro:computcompl:dD}}

In the $d$D case, the computational complexity of the linear bounded automaton (LBA) is polynomial in the input size $N$.  
First note that both the number of spins and the energy are smaller than the input size, i.e. $n < N$ and $|u(H(x))| < N$.
We show that each of the subroutines' running time is at most polynomial in $N$. In between the subroutines there might be some cleanup tasks (e.g.\ cleaning a track, moving the head to the beginning), which are all linear in $N$.  
Recall the subroutines of the LBA (page \pageref{pag:LBA}):
\begin{enumerate}
	\item \textbf{Validate pattern}
	\item \textbf{Count system size}
	\item \textbf{Compute side lengths}
	\item \textbf{Compute all possible local interactions} 
	\item \textbf{Validate the local interactions}
	\item \textbf{Evaluate local Hamiltonians}
	\item \textbf{Validate energy}
\end{enumerate}
Steps 1., 2., and 7. only need one run along the tape (or part of it), so they are linear in $N$.

Both \textbf{Add} and \textbf{Multiply} are polynomial in their arguments, as $\textbf{Add}(a,b)$ uses $O(a\cdot b)$ steps and 
$\textbf{Multiply}(a,b)$ uses $O(a^2b^2)$. 
Note also that $a^d$ for a fixed constant $d$ uses  $O(a^{2d})$ steps. 

\textbf{Compute side lengths} computes $m^d \ell$ for $m=1, 2, \ldots m_{\max} = (n/\ell)^{1/d}$. This can be bounded by $m_{\max}$ calculations of $m_{\max}^d \ell$. As we have shown above, this is polynomial in $m_{\max}$. As $m_{\max} =(n/\ell)^{1/d}$ it also is polynomial in $n$ and hence in $N$. 

\textbf{Compute all possible local interactions} consists of multiplications and additions of $n_1, \ldots, n_{d-1}$ and some constants for each local interaction. As the number of local interactions $|\mathcal{I}_{n,j}|$ is constant, the running time of the subroutine is polynomial in $N$.

\textbf{Validate the local interactions} copies the current local interaction to another track and compares it to all precomputed local interactions. Copying can be done in $O(N^2)$ and comparing uses at most $N$ steps. Then, the current index is decomposed and for each valid candidate,  where additional additions and comparisons are done. Decomposing the index is achieved by a hardcoded set of divisions which are polynomial in $n$, and the number of candidates is bounded by $|\mathcal{I}_{n,j}|$. 

\textbf{Evaluate local Hamiltonians} uses  $O(k N^2)$ steps to copy the relevant spins to a separate track and a linear amount of steps to read the spins and write down the energy. Copying the energy to the correct track again uses $O(k N^2)$ steps. \textbf{Validate the local interactions} and \textbf{Evaluate local Hamiltonians} will each be called $n$ times. This introduces some polynomial overhead for the counter, rendering everything still polynomial.
In conclusion, all the subroutines are polynomial in $N$, hence the full algorithm is polynomial in $N$.

In order to account for other boundary conditions, some addresses need to be replaced during the \textbf{Evaluate local Hamiltonians} subroutine. Namely, an address  decomposed as $\sum_i v_i N_i$ is replaced by $\sum_i v'_i N_i$,  where  $v'_i = a_i + v_i \mod (n_{i+1})$ where $a_i$ is the decomposition of the spin index $j = \sum_i a_i N_i +1$. All those operations can be done in polynomial time.

\subsection{Proof of \cref{pro:computcompl} \ref{pro:computcompl:all}}
To obtain a bound for the all-to-all case we only need to consider the modified \textbf{Validate the local interactions} subroutine, as the rest is identical to the $d$D case.
This modified subroutine calculates the addresses for each local interaction. There are $n \choose k$ local interactions, and for each local interaction the automaton compares the calculated interaction to that given on the input, and then increments the calculated address to the next one. 
Both steps are linear in $n$, and hence the complexity of the subroutine is   $O(n^{k+1})$, i.e.\ polynomial in $N$.

\section{Proof of \cref{pro:binary} (Classification in the Chomsky hierarchy with binary encoding)} 
\label{ssec:proofofbinary}

 \cref{pro:binary} can be broken down into the following statements: 
	\begin{enumerate}[before=\leavevmode, label=(\roman*),ref=(\roman*)]
	\item \label{pro:binary:0D} 
	If $H$ is a 0D (\cref{def:0D}) or an effectively 0D spin Hamiltonian (\cref{def:eff0D}) then $L_H^{\mathrm{b}}$ (\cref{def:language}) is regular. 
	\item \label{pro:binary:1D} 
	If $H$ is a 1D spin Hamiltonian (\cref{def:1D}) then $L_H^{\mathrm{b}}$ is context-sensitive and not context-free. 
	\item \label{pro:binary:dD} 
	If $H$ is a $d$D spin Hamiltonian with $d\geq2$ (\cref{def:dD})   then $L_H^{\mathrm{b}}$ is context-sensitive and not context-free. 
	\item \label{pro:binary:all} 
	If $H$ is an all-to-all spin Hamiltonian (\cref{def:all-to-all}) then $L_H^{\mathrm{b}}$ is context-sensitive and not context-free. 
	\end{enumerate}
	In addition, \ref{pro:binary:0D}, \ref{pro:binary:1D} and  \ref{pro:binary:dD} hold for open, periodic or quasi-periodic boundary conditions along any dimension. 
	
	This proof is structured following these items: we will prove \ref{pro:binary:0D}, \ref{pro:binary:1D},  \ref{pro:binary:dD}  and \ref{pro:binary:all}, and finally the result for other boundary conditions. 
	
\subsection{Proof of \cref{pro:binary} \ref{pro:binary:0D}}

We follow the same line of reasoning as for \cref{thm:main} \ref{thm:0D}. The binary language of a 0D spin Hamiltonian is finite and hence regular. For the effectively 0D Hamiltonian we only need to slightly modify the DFA. When reaching $\sun$ the automaton transitions to a state $E = b(H(x))$ (of which there are finitely many because the energy is bounded), and  continues to compare the string $b(H(x))$ in its head with the encoded energy given in the input. It accepts if and only if they match.  

\subsection{Proof of \cref{pro:binary} \ref{pro:binary:1D}}

For 1D spin Hamiltonians we use Ogden's lemma to show that $L_H^{\mathrm{b}}$ is not context-free. Then we modify the LBA from the $d$D case so that it accepts $L_H^{\mathrm{b}}$, showing that the latter is context-sensitive.

\begin{lemma}[Ogden's lemma \cite{Og68}]\label{lem:ogden}
Let $L$ be a context-free language. Then there exists an integer $p\geq 1$ such that for any string $z\in L$ with $|z|\geq p$ the following holds: For any way of selecting at least $p$ symbols in $z$, there is a decomposition $z=uvwxy$ such that 
\begin{enumerate}
\item[(i)] $vx$ contains at least one of the selected symbols, 
\item[(ii)] $vwx$ contains at most $p$ selected symbols, and 
\item[(iii)] $uv^jwx^jy \in L$ for all $j\geq 0$.
\end{enumerate}
\end{lemma}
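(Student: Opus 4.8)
The plan is to prove Ogden's lemma by a refinement of the parse-tree argument behind the ordinary context-free pumping lemma, where the extra bookkeeping tracks the \emph{selected} positions rather than all positions. First I would fix a context-free grammar $G$ for $L$ in Chomsky normal form, let $r$ be its number of variables, and set $p = 2^{r+1}$. The purpose of Chomsky normal form is that every internal node of a derivation tree has exactly two children, so the tree is binary and the counting below stays clean.

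Given a string $z \in L$ together with a choice of at least $p$ selected positions, I would take any derivation tree $T$ for $z$ and single out a distinguished root-to-leaf path. Call an internal node a \emph{branch point} if both of its subtrees contain at least one selected leaf. Starting at the root, at each node I descend into the child whose subtree contains the larger number of selected leaves (ties broken arbitrarily), stopping upon reaching a leaf. At a non-branch point all selected leaves lie in a single child, so the selected count is preserved; at a branch point the chosen child still carries at least half of them. Hence after passing $b$ branch points the current subtree retains at least $(\text{initial count})/2^b$ selected leaves, and since the path terminates at a single selected leaf this forces $2^b \geq p = 2^{r+1}$, i.e.\ at least $r+1$ branch points occur along the path.

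Next I would apply the pigeonhole principle to the lowest $r+1$ branch points: two of them, say $N_1$ lying above $N_2$, must carry the same variable $A$. This yields the decomposition $z = uvwxy$ with $w$ the yield of the subtree rooted at $N_2$ and $vwx$ the yield of the subtree rooted at $N_1$, so that $A \Rightarrow^* v A x$ and $A \Rightarrow^* w$. Iterating the first derivation $j$ times and then applying the second gives $uv^jwx^jy \in L$ for all $j \geq 0$, which is property (iii). Property (i) then follows because $N_1$ is a branch point: the sibling subtree that lies off the chosen path contains a selected leaf, and that leaf sits inside $v$ or inside $x$, so $vx$ contains at least one selected symbol.

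The step I expect to be the real obstacle is property (ii), bounding the number of selected symbols inside $vwx$. The key is that I deliberately chose the repeated pair among the \emph{lowest} $r+1$ branch points, so there are at most $r+1$ branch points on the path from $N_1$ down to the terminating leaf. Since the selected count can at most double at each branch point as one ascends, and is unchanged at non-branch points, the subtree at $N_1$ carries at most $2^{r+1} = p$ selected leaves, establishing (ii). The delicate part is verifying that this ``at most double per branch point'' ascent is precisely the reverse of the halving used in the descent, confirming that the $r+1$ lowest branch points genuinely exist and furnish a repeated label, and keeping the two inequalities---the descent forcing $\geq r+1$ branch points and the ascent capping the count by $2^{r+1}$---referring consistently to the same path. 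This is exactly where the value $p = 2^{r+1}$ gets pinned down.
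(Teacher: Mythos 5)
Your proof is correct, but there is nothing in the paper to compare it against: the paper does not prove Ogden's lemma at all. It imports the statement verbatim from the literature (cited as \cite{Og68}) and uses it as a black box in the proof of \cref{pro:binary}, to show that $L_H^{\mathrm{b}}$ of a 1D spin Hamiltonian is not context-free. What you have reconstructed is the standard parse-tree proof from the literature: Chomsky normal form with $r$ variables and $p=2^{r+1}$; the descent path along which the selected-leaf count at least halves at each branch point, forcing at least $r+1$ branch points; pigeonhole on the \emph{lowest} $r+1$ branch points to obtain a repeated variable $A$ and hence $A \Rightarrow^* vAx$, $A \Rightarrow^* w$, giving (iii); the off-path sibling of $N_1$ giving (i); and the converse at-most-doubling bound on the ascent from the terminating leaf to $N_1$, capping the selected count in $vwx$ by $2^{r+1}=p$, giving (ii). All of these steps are sound, and your observation that the descent (halving) and ascent (doubling) bounds must refer to the same path is exactly the point that fixes $p=2^{r+1}$. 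Two cosmetic caveats, neither a gap: in Chomsky normal form a preterminal node with production $A \to a$ has one child rather than two, so the tree is binary only above the preterminals (the counting is unaffected if each preterminal--terminal pair is treated as a single leaf); and the conversion to Chomsky normal form presupposes $\epsilon \notin L$, which is harmless here because the lemma only concerns strings with $|z| \geq p \geq 1$ and the empty word can be handled by the usual special start-symbol convention.
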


So assume $L_{H}^{\mathrm{b}}$ is context-free.    
Then for any $p\geq 1$, there exists an element $z\in L_{H}^{\mathrm{b}}$ such that the binary representation of the energy, $E_b=b(H(x))$, contains at least $p$ symbols (otherwise, the energy would be bounded and the Hamiltonian would   be effectively 0D). Our selection of $p$ symbols (\cref{lem:ogden}) are the $p$ most significant digits of $E_b$. 
We claim that for any decomposition $z=uvwxy$  for which condition (i) holds, condition (iii) does not hold. 
Assuming (i), $v x$ contains at least one symbol from $E_b$.
Now, the string $uv^j wx^jy$ with $j>1$ is  shifting the most significant digit of $E_b$ at least $j-1$ positions to the left, so that the corresponding energy grows by at least a factor of $2^{j-1}$. To obtain a valid string, $v$ needs to lie to the left of $\sun$, otherwise the new energy will not match the energy of the input configuration. For the string to still be well-formed (i.e. a valid encoding of an input) $v$ can either lie entirely within one address or it can contain a number $m$ of spins and local interactions (i.e.\ $m$ of each). In the former case, choosing $j>k$ ensures that $|v^j|>k$ and that the corresponding address is longer than $k$ and hence invalid.
In the latter case, adding $(j-1)m$ spins and local interactions can change the energy by at most $  (|E_{\max}| + |E_{\min}| ) (2k + (j-1)m)$, where $E_{\min}$ and $E_{\max}$ are the minimal (maximal) energy of any local Hamiltonian,  
the $2k$ term  accounts for the local interactions not in $v$ which  contain spins from within $v$,  
and the $(j-1)m$ term accounts for all new local interactions. 
Thus, the energy in the second part of the input grows faster than the energy encoded in the first part. It follows that $L_{H}^{\mathrm{b}}$ is not context-free.

To show that $L_{H}^{\mathrm{b}}$ of a 1D spin Hamiltonian is context-sensitive we only need to modify the \textbf{Validate energy} subroutine of the LBA of the proof of \cref{thm:main} \ref{thm:dD}.
All other subroutines also work in the 1D case;  for example, to list all possible local interactions, the LBA can just write them out directly, as they do not depend on $n$.
The modified \textbf{Validate energy} calculates the binary encoding $b(T_E)$ of the calculated energy $T_E$, before comparing the former to the input.
To calculate the binary encoding, the machine calculates $i$, $2^i$ and $2^{i+1}$ (in unary, on three separate tracks) for $i=1,2,\ldots$ until $2^{i+1}$ is larger than $T_E$ (or reaches the end of the tape). Then it adds a $1$ on position $i$ on a fourth track $T_b$ and subtracts $2^i$ from $T_E$. Those steps are repeated until $T_E$ is empty. Then the machine fills the gaps between the $1$s (and the start of the tape) on track $T_b$ with $0$s, adds the sign symbol at the end ($+$ or $-$, depending on the sign of $T_E$) and inverts the resulting string. Track $T_b$ now contains $b(T_E)$ which is compared with $b(E)$ given in the input.

\subsection{Proof of \cref{pro:binary} \ref{pro:binary:dD} and \ref{pro:binary:all}}

The statement of $d$D  and all-to-all spin Hamiltonians is easily proven by using the modified \textbf{Validate energy} subroutine given in the proof of \ref{pro:binary:1D}. The resulting LBA will  recognise $L_H^{\mathrm{b}}$ of a $d$D or all-to-all spin Hamiltonian, implying that $ L_H^{\mathrm{b}}$ is context-sensitive. 
To show that $L_H^{\mathrm{b}}$ is not context free we  use the same proof as for the unary case, as it does not depend on the encoding of the energy.

\subsection{Proof of \cref{pro:binary} with other boundary conditions}

The boundary conditions are independent of the choice of a binary encoding of the energy, in the sense that  the former only cares about which spins interact whereas the latter only comes into play once all the local interactions are evaluated. So the modifications to the automata needed for the other boundary conditions and those needed for the binary encoding of the energy can be applied together. Additionally, the proofs involving variations of the pumping lemma are independent of the boundary conditions---a fact that we also used in the proof of \cref{thm:main} with other boundary conditions.

\section{Proof of \cref{pro:langtime} (Classification in the Chomsky hierarchy of $L_U$) \label{ssec:proofoflangtime}}

\subsection{Proof of \cref{pro:langtime} \ref{pro:langtime:0D}}

	If $H$ is a 0D spin Hamiltonian, its domain is finite, so  $L_U$ is a finite language, and thus trivially regular. 

\subsection{Proof of \cref{pro:langtime} \ref{pro:langtime:special}}

	Let $H$ be an effectively 0D spin Hamiltonian, or a 1D spin Hamiltonian, in either case with one spin symbol and fixed interactions. We first claim that $L_U$ is not regular, and use  the pumping lemma for regular languages (\cref{lem:regularpumping}) to prove it. 
So let $p>0$ and choose a string $z=e(s)\sun e(s) \newmoon$ with $s \in \mathcal{D}$ and $ |e(s)|>p$. Then choose any decomposition $z=uvw$ with $v \neq \epsilon$, $i\neq 1$ and define the pumped string $z_p = u v^i w$.
	As both $\sun$ and $\newmoon$ can only appear once, they cannot be part of $v$. Hence the pumping can only happen either on the left or on the right of $\sun$. In both cases, the parts before and after $\sun$ (not including $\newmoon$) will have different lengths. Thus, $z_p$ is not an element of $L_U$.

To show that $L_U$ is deterministic context-free, we provide a deterministic pushdown automaton (DPDA) recognising the language.
	The automaton needs to check that:
	\begin{enumerate}
		\item The string is well-formed, i.e.\ it is of the form $e_b^n \sun e_b^m \newmoon$ where $e_b$ is the encoding of the (single valid) spin symbol followed by the encoding of the local interaction, and where $m, n \in \N$.
		\item $n = m$.
	\end{enumerate}
	The first part can be done by a deterministic finite-state automaton (DFA) running `on top' of the DPDA, as $e_b$ is finite and arbitrary repetitions of finite words form a regular language. The addition of $\sun$ and $\newmoon$ does not change the complexity.
	The second task can be done by a DPDA. The automaton puts one symbol onto the stack for each symbol in the input until it reaches $\sun$. Then it pops one symbol from the stack for each symbol in the input until it reaches $\newmoon$. The automaton accepts only if the last symbol of the stack is popped when reaching $\newmoon$.

\subsection{Proof of \cref{pro:langtime} \ref{pro:langtime:other}}

	Let $H$ be a $d$D spin Hamiltonian with $d\geq 1$ which does not satisfy the conditions of case  \ref{pro:langtime:special}, 
	or an effectively 0D spin Hamiltonian  not satisfying the conditions of case  \ref{pro:langtime:special}, 
	or an all-to-all spin Hamiltonian. 
	
	Let us first assume that $H$ have $q>1$ or unfixed interactions. 
	We use the pumping lemma for context-free languages (\cref{lem:contextfreepumping}) to show that $L_U$ is not context-free.
	So let $p>0$, and choose a configuration $s \in \mathcal{D}$ with an even number of spins $|s| > p$, 
	such that the first $|s|/2$ spins have a different value than the last $|s|/2$ (if there is only one possible spin symbol we can use different local interactions), and choose the string $z= e(s)\sun e(s)\newmoon$.
	Choose any decomposition $z=uvwxy$ with $|vwx|\leq p$, $|vx|>0$ and any $i\neq 1$ and define the pumped string $z_p = u v^i w x^i y$.
	We can assume that the pumped string is well-formed (i.e.\ of the form $e(\sigma) \sun e(\sigma') \newmoon$ for $\sigma,\sigma'$ in $\mathcal{D}$), as any malformed string is surely not in $L_U$.	
	Another necessary condition for $z_p$ to be in $L_U$ is that $|e(\sigma)| = |e(\sigma')|$,
	 which is possible only  if $\sun \in w$ and $|v|=|x|<p/2$. 
	Since $|e(s)| > |s|>p$, the spin values in $v$ are different from the spin values in $x$. 
	Say, for example, that the spins in the first half of $s$ have value $1$ and the spins in the second half have value $2$. 
	Then $v$ contains only spins with value $2$ and $x$ only such with value $1$. 
	In the pumped string, $\sigma$ contains more spins with value $2$ than with value $1$, 
	whereas $\sigma'$ contains more spins with value $1$ than $2$. 
	Hence $z_p$ is not in $L_U$.
	
	Now assume that  $H$ be an all-to-all spin Hamiltonian with $q=1$ and fixed interactions. 
	As any string $e(s)$ for $s \in \mathcal{D}$ is  such that the first half is different to the second half (as the number of local interactions per spin changes), we  choose $s \in \mathcal{D}$ with $|s| > p$ and follow the same argument as directly above to conclude that $L_U$ is not context-free. 

	Finally assume that $H$ be a $d$D spin Hamiltonian with $d\geq2$ with $q=1$ and fixed interactions. 
	In this case, we apply the pumping lemma for context-free languages as in the proof of \cref{thm:main} \ref{thm:dD}, since it only relies in the structure of $e(s)$ for $s\in \mathcal{D}$. It follows that $L_U$ is not context-free.

	To show that $L_U$ is context-sensitive, 
	we provide a linear bounded automaton (LBA) recognising the language, which  checks the following subroutines: 
	\begin{enumerate}
		\item The string is well-formed, i.e.\ it is of the form $e(x)\sun e(y) \newmoon$ where $e(x)$ and $e(y)$ are valid encodings (i.e.\ they lie in the image of $e$).
		\item $x$ is a valid configuration, i.e.\ $x \in \mathcal{D}$.
		\item $e(y)$ is equal to $e(x)$.
	\end{enumerate}
	Subroutine 1 can be carried out by a DFA by checking a pattern, as  in the proof of \cref{thm:main} \ref{thm:dD}.
	Subroutine 2 has in fact been implemented for all cases by the respective automata in the proof of \cref{thm:main}. Since all  those automata are LBAs or weaker types of machines, this subroutine can be run on an LBA.
	Subroutine 3 is easily run on an LBA: The automaton copies the part of the input after $\sun$ to a new track and moves it all the way to the left. Then it moves along the tape comparing the two tracks one symbol at the time (rejecting if they differ), until it reaches $\sun$ on the input. It accepts if $\newmoon$ is on the copied track and rejects otherwise.

\end{document}